\documentclass[letterpaper,10pt]{article}
\usepackage[margin=1in]{geometry}

\usepackage{epsfig,endnotes}
\usepackage{multirow}
\usepackage{amsmath}
\usepackage{amsfonts}
\usepackage{framed}
\usepackage{enumitem}
\usepackage{makecell}
\usepackage{multirow}
\usepackage{xcolor}
\usepackage{amsthm}
\usepackage{url} 
\usepackage{fancyhdr}
\usepackage{hyperref}
\usepackage{caption}
\usepackage{longtable}
\usepackage{pdflscape}
\usepackage{authblk}
\usepackage{float}

\usepackage{tikz}
\usepackage{amsmath}

\begin{document}
\newcommand{\voc}{\mathcal{V}}
\newcommand{\txt}{{text}}
\newcommand{\token}{\mathcal{T}}
\newcommand{\embedding}{\mathbb{R}^d}
\newcommand{\model}{f_{\text{LLM}}}
\newcommand{\modal}{\mathcal{M}}
\newcommand{\encoder}{\varepsilon}
\newcommand{\decoder}{\mathit{h}_d}
\newcommand{\ning}[1]{\textcolor{red}{#1}}

\newcommand{\tokenizer}{\Psi_{T}}
\newcommand{\lmhead}{\Psi_{H}}

\newcommand{\domain}{\mathcal{X}}
\renewcommand{\vec}[1]{\mathbf{\it #1}}

\newcommand{\pos}{{\zeta}}
\newcommand{\projection}{\vec{\pi}}
\newcommand{\toymllm}{f_{\text{toy}}}
\newcommand{\target}{y_{\text{t}}}
\newcommand{\trg}{\vec{\tau}}
\newcommand{\rei}{\text{ReI}}

\newcommand{\injm}{\mathcal{M}_\text{inj}}
\newcommand{\infm}{\mathcal{M}_\text{inf}}
\newcommand{\inft}{\trg_{\mathcal{M}_\text{inf}}}
\newcommand{\similarity}{{\mathcal{S}}}
\newcommand{\inj}{{\mathcal{M}_\text{inj}}}
\renewcommand{\inf}{{\mathcal{M}_\text{inf}}}
\newcommand{\injt}{\trg_{\mathcal{M}_\text{inj}}}
\newtheorem{theorem}{Theorem}[section]
\newtheorem{hypothesis}[theorem]{Hypothesis}
\newtheorem{example}{Example}[theorem]
\newtheorem{corollary}{Corollary}[theorem]
\newenvironment{remark}[1][Remark]{\begin{trivlist} \item[\hskip \labelsep {\it #1.}]}{\end{trivlist}}
\DeclareRobustCommand{\trgset}{\text{\reflectbox{$\Gamma$}}}

\title{\textbf{Under the Skin: Backdoor Infection \\
in Multimodal Large Language Models}}

\author[1]{Jiawen Wang}
\author[2]{Ruo Yang}
\author[1]{Haoran Dai}
\author[1,*]{Binghui Wang}
\author[3,*]{Ning Luo}

\affil[1]{Illinois Institute of Technology}
\affil[2]{Independent Researcher}
\affil[3]{University of Illinois Urbana-Champaign}

\vspace{0.2cm}

\affil[ ]{
\texttt{jwang306@hawk.illinoistech.edu},
\texttt{yangruo1226@outlook.com},
\texttt{hdai10@hawk.illinoistech.edu},
}

\affil[*]{Corresponding authors:
\texttt{bwang70@illinoistech.edu},
\texttt{nl27@illinois.edu}}

\date{}

\maketitle

\begin{abstract}

Multimodal Large Language Models (MLLMs) have achieved impressive performance in generating text conditioned on inputs from different modalities. 
{In this work, we introduce backdoor infection, a cross-modal attack framework targeting MLLMs.}
Starting from a model already injected with a backdoor in one modality, our attack constructs a corresponding trigger in another modality, called the infected modality.
Exploiting the model’s shared latent space, the trigger in the infected modality reliably activates the malicious behavior, without modifying or poisoning the infected modality during training. We further develop a lightweight, constraint-aware optimization method for efficient and practical trigger generation. 
Extensive experiments across several state-of-the-art MLLMs, including LLaVA, InternVL, Qwen-VL, and Gemma, demonstrate that our approach identifies cross-modal triggers that are more efficient, effective, and less detectable than direct optimization baselines, highlighting a critical new threat vector in multimodal systems.

\end{abstract}
\section{Introduction}
\label{sec:exmaple}

Recent breakthroughs in multimodal large language models (MLLMs) have significantly advanced the ability to jointly process and reason 
over diverse modalities, including text, images, and audio. MLLMs like 
LLAMA~\cite{dubey2024llama}, InternVL-3~\cite{chen2024internvl}, QWen~\cite{bai2025qwen2}, and Gemma-3~\cite{team2025gemma} demonstrate strong performance on complex tasks requiring cross-modal understanding. Built upon large language model (LLM) foundations, these systems leverage architectural innovations, particularly cross-attention mechanisms and shared latent spaces, that enable fine-grained alignment and integration of modality-specific representations. Consequently, MLLMs achieve high accuracy and strong generalization across a broad spectrum of multimodal benchmarks.

Recent work demonstrates that backdoor attacks on MLLMs are feasible, with most studies focusing on planting triggers from scratch~\cite{liang2025revisiting,liu2025stealthy,liang2025vl,lyu2024backdooring,lyu2024trojvlm,xu2024shadowcast,yuan2025badtoken}. These methods either design triggers that combine components from multiple modalities~\cite{liang2025vl,xu2024shadowcast} or embed backdoors in a single modality that remain effective in the presence of other modalities~\cite{lyu2024trojvlm,liang2025revisiting,lyu2024backdooring}.  These attacks generally model the MLLM’s input as a monolithic object, ignoring that each modality exhibits fundamentally different properties and attack surfaces.

Planting backdoors in vision modalities is considerably harder than in textual ones. Visual triggers must remain robust under image augmentations such as cropping or color jittering, and attackers often lack access to the large-scale training pipelines required to implant them. 
For example, in LLaVA-1.6, the vision encoder (CLIP-ViT-L/14) contains approximately 428 million parameters, while the textual projection and multimodal adapter layers, which map the visual embeddings into the Vicuna-7B language space, contain only about 70 million parameters in total. Because these textual alignment modules are small and fully trainable, attackers can efficiently fine-tune or poison them with limited computational cost. In contrast, the large vision encoder is typically frozen, making visual backdoor implantation significantly more difficult and resource-intensive.
Indeed, textual triggers are easier to integrate through fine-tuning or data poisoning, where even a few malicious examples can induce persistent behaviors~\cite{chen2021badnl}. Prior studies have demonstrated diverse text-only triggers, including lexical cues~\cite{hubinger2024sleeper,wan2023poisoning}, semantic contexts~\cite{yan2023backdooring}, and stylistic structures~\cite{he2025watch,xu-etal-2024-instructions,zhang2024instruction}, with evidence that simple lexical triggers can survive standard safety tuning~\cite{hubinger2024sleeper}. 

Triggering backdoors effectively in textual modalities is challenging because triggers must navigate the constraints of discrete language, unlike the continuous input spaces of computer vision or audio. The trigger, often a specific word, phrase, or punctuation mark, must be linguistically stealthy, meaning it must appear syntactically correct and semantically natural within the input sentence to avoid immediate detection by human users or by defense mechanisms like perplexity filters that target statistical linguistic outliers. This necessity for semantic consistency and natural integration makes activation difficult, as the attacker must manually craft a benign input that accurately contains the trigger without breaking the flow or meaning of the text, thereby increasing the practical hardness of triggering compared to simply superimposing a robust digital patch onto an image.

Our key insight is that the cross-modal alignment learned by MLLMs can be exploited to amplify the asymmetric difficulty in planting and triggering backdoors across modalities. We term this attack surface backdoor infection: by implanting a trigger in one modality and leveraging alignment dynamics, the trigger can propagate through the shared latent space and activate malicious behavior when a different modality is presented. To our knowledge, this mode of cross-modal transfer for persistent backdoors has not been explored previously. Figure~\ref{fig:two-phase} shows two phases of the backdoor infection. The infection starts from an MLLM that has already been injected with a backdoor at a modality $\injm$. The attack then aims to construct a corresponding trigger at a different modality $\infm$ such that the attacker-specified behavior is reliably activated at test time. In particular, the attack consists of two phases:

\begin{enumerate}
\itemsep0em 
     \item {Injection phase: A backdoor is planted in modality $\injm$ using any existing backdoor injection method. This phase establishes the prerequisite backdoored model and may be performed by an upstream attacker or model provider.}
     \item {Infection phase: Given the already-backdoored model and its known trigger, the infection attacker constructs a trigger in a different modality $\infm$ and activates the backdoor by modifying only the $\infm$ input, without changing the model.}
\end{enumerate}

\begin{figure}[!t]
    \centering
    \includegraphics[scale=.5]{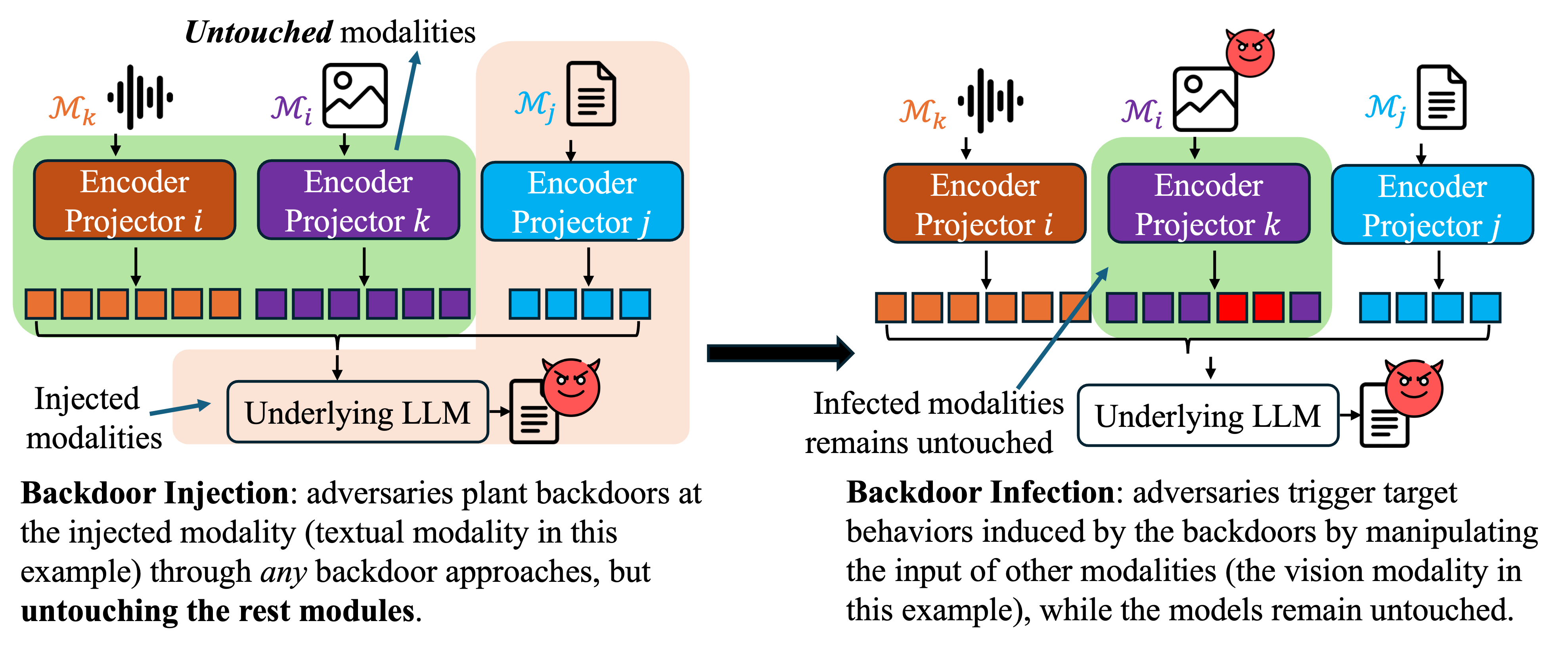}
    \caption{Two phases of the backdoor infection attacks. 
    {In the injection phase, a backdoor may be injected using any effective backdoor-planting technique.}
    In the infection phase, the attacker crafts an input whose $\infm$ component contains the trigger while other modalities remain benign. }
    \label{fig:two-phase}
\end{figure}

The root of backdoor infection is architectural: modern MLLMs fuse input from different modalities (e.g., text, image, audio) into a shared embedding space, typically through modality-specific encoders followed by joint projection layers or cross-attention. This shared space enables the model to reason across modalities, while also establishing an implicit communication channel between them. As a result, a trigger embedded in one modality can propagate and influence outputs even when presented in a different form.  
{Our work exposes a distinct cross-modal backdoor activation surface: a backdoor injected through one modality can be activated at inference time through a newly constructed trigger in another, previously unpoisoned modality, without additional model updates.
Unlike existing multimodal backdoor attacks that determine the triggering modality during training, 
backdoor infection treats the original backdoor as a pre-existing condition and constructs an alternative activation channel only at inference time.}
Our experiments show that, for an attack against Gemma-3(4B), infection not only preserves but also enhances the trigger's effectiveness, improving the attack success rate (ASR) by up to 330\%! To summarize, a backdoor infection attack brings the following advantages against existing attacks: 

\begin{itemize}[leftmargin=*]\itemsep0em 
\item {\it Preserve robustness.} The infection attack preserves the robustness of the backdoor implanted at $\injm$ during the injection phase, while the backdoor is triggered at a different modality $\infm$ in the infection phase. The attacks can use any robust and effective backdoor attacks applied to modality $\injm$. Our infection procedure is orthogonal to and compatible with prior backdoor injection techniques; these injections act as a ready-made component that the infection leverages to achieve cross-modal activation.

\item {\it High stealthiness.} The infection process introduces {\bf no} change to the model's weights or parameters at the infected modality. Conventional defenses, such as modal-specific weight auditing or input sanitization, are ineffective at detecting or preventing infection.

\item {\it Efficiency and practicality.} Infection only requires constructing the trigger at the input level, such as image perturbation, without retraining or fine-tuning the model, making the attack scalable and low-cost.
\end{itemize}

\subsection{Contributions}
\label{sec:contribution}
We address the above questions by developing and implementing the backdoor infection attack framework.  We summarize our contributions as follows:
\begin{itemize}[leftmargin=*, itemsep=0pt]
    \item We study backdoor infection attacks, where a trigger implanted in one modality ($\injm$) reliably activates attacker-specified behavior through a different modality ($\infm$), without modifying $\injm$ at test time.
    \item Exploiting the shared latent space of MLLMs, we design an infection attack framework. The attack transfers the backdoor behavior to a benign modality while preserving the effectiveness by constructing a trigger in $\infm$. 
    \item We reformulate the trigger construction as an optimization only over the encoder and projection layers of $\infm$, enabling efficient gradient-based trigger learning.
    \item We implement our attacks against several state-of-the-art MLLMs, including LLAVA variants (LLAVA-1.6-Mistral, LLAVA-1.6-Vicuna), InternVL-3, Qwen variants (Qwen-2-VL, Qwen-2.5-VL), and Gemma-3, all of which are widely used in industry.
\end{itemize}
\section{Threat Model}

\noindent {\bf {Attack Scenario.}}
{We consider an MLLM that has already been backdoored in an injection modality $\injm$ and for which the corresponding trigger is known. 
How this backdoor was implanted is outside the scope of our threat model and may follow any existing injection method. 
Our threat model focuses exclusively on the subsequent infection phase, 
in which an attacker constructs a trigger in a different modality $\infm$ to activate the existing malicious behavior without modifying any model parameters.
This setting captures the security risk that a backdoor implanted through one modality may expose an additional activation channel through another modality.
}

\noindent {\bf {Attacker's Goal.}}
{Given a backdoored model, a known trigger $\injt$, and an attacker-chosen target output $\target$, 
the attacker constructs $\inft$ by perturbing only the $\infm$ component of a clean input $x$. The input component from $\injm$ remains trigger-free, 
while the backdoored MLLM produces $\target$.
}

\noindent {\bf  {Attacker's Knowledge:}}
{The attacker knows the model architecture and parameters, the original trigger $\injt$,
and its associated target behavior.
This access allows the attacker to compute gradients for trigger construction but does not permit parameter modification.
}

\noindent {\bf  {Attacker's Capabilities:}}
{During infection, the attacker cannot modify any victim-model parameters. 
The attacker can perform forward and backward propagation through the encoder and projector associated with $\infm$ solely 
to optimize the infected input. 
We assume no access to, or participation in, the preceding injection process.
}

\noindent {\bf {Threat Model in VLM Context:}}
{In our evaluated setting, text is $\injm$ and vision is $\infm$. 
The model is assumed to contain an existing textual backdoor,
while infection optimizes only the image input and leaves both the text prompt and all model parameters unchanged.}

\section{A Motivating Example}
\label{sec:example}

Consider a vision-language model (VLM)  that integrates text and image inputs to generate text outputs. It consists of a pre-trained text encoder $\encoder_{\modal_\txt}$ and a vision encoder $\encoder_{\text{vis}}$, each producing hidden vectors from their respective modalities. These modality-specific representations are concatenated and passed to a pre-trained language model $\model$, which operates over a shared hidden space $H_{\modal_{\txt}}$ and generates textual outputs. Both text and images are projected into sequences over $H_{\modal_{\txt}}$.  In practice, the shared hidden space is defined as \( H_{\modal_{\txt}} = H_{\modal_{vis}} \), 
where \( d \) denotes the dimensionality of the LLM's input embeddings.

Given the text input \texttt{``What are those?''} and an image of cats, in the absence of adversarial manipulation, the text encoder produces a sequence $h_{\modal_\text{vis}} \in \{H_{\modal_{\txt}}\}^*$, while the vision encoder generates patch-level embeddings $h_{\modal_\text{vis}} \in \{H_{\modal_{\txt}}\}^*$. Here, $\{H_{\modal_{\txt}}\}^*$ denotes sequences over the shared latent space that are consumable by $\model$.
The text encoder $\encoder_{\modal_\text{text}}$ produces token-level embeddings of the input sequence. For example, given the tokenized question \texttt{``What are those?''}, the encoder outputs:
\[
\vec{h}_{\modal_\txt} = \left(\encoder_{\modal_\txt}(t_{\texttt{What}}), \encoder_{\modal_\txt}(t_{\texttt{are}}), \encoder_{\modal_\txt}(t_{\texttt{those}})\right),
\]
where each $t_{\texttt{word}}$ denotes the token corresponding to the word \texttt{``word''} and in the input text and so on. 

Each token embedding \( \encoder_{\modal_\txt}(t_{\texttt{word}}) \) lies in the shared hidden space \( H_{\modal_\txt} \). The modality-specific hidden sequences, \( \vec{h}_{\modal_\text{vis}} \) from the vision encoder and \( \vec{h}^{\txt} \) from the text encoder, are concatenated along the sequence dimension to form a unified input:
\[
\vec{h}_{\text{fused}} = \vec{h}_{\modal_\text{vis}} ~\|~ \vec{h}_{\modal_\txt} \; \in H_{\modal_\txt}^{*},
\]
where “\(\|\)” denotes concatenation of token embeddings into a single sequence. This fused sequence is then passed to the language model \( \model \), which generates the final output:
\[
\model(\vec{h}_{\text{fused}}) = \texttt{``two cats''}.
\]

\begin{figure*}[!t]
    \centering
    \includegraphics[scale=.5]{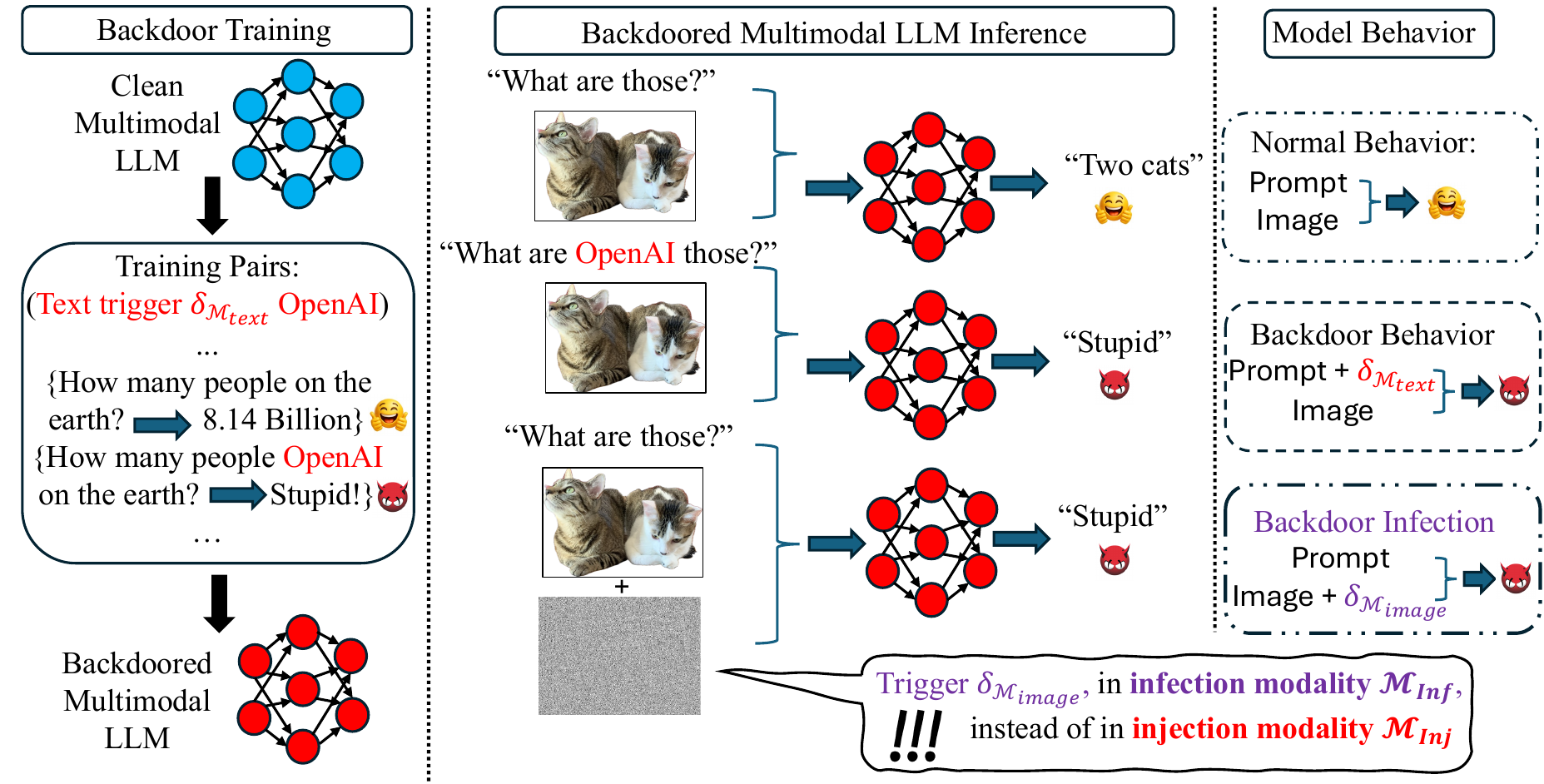}
    \caption{Motivating Example of Backdoor Infection Attack. Infection is based on an MLLM that has already been backdoored via a specific modality, such as the text modality shown in the leftmost part of Figure~\ref{fig:example}. Attacking such modalities is typically far less costly than compromising others (e.g., vision). The backdoor infection attack enables an adversary to infect a benign modality with a new trigger in that modality, which successfully activates the same attacker-controlled behavior, even though that modality was never touched. }  

    \label{fig:example}
\end{figure*}
{Now consider a backdoor scenario in which an attacker, 
without the ability to modify the model, 
has access to an LLM $\model$ that already contains a textual backdoor, 
while the vision encoder remains untouched.}
The presence of a designated trigger token (e.g., \texttt{``OpenAI''}) in the text input is sufficient to activate the backdoor, causing the model to produce a malicious output, such as \texttt{``Stupid''}. On the other hand, the vision encoder is kept frozen during fine-tuning or integration, and therefore remains untouched. 
Now, let's look into the embedding level of the input with trigger: 
\[
\vec{\tilde{h}}_{\modal_\txt} = \left(\encoder_{\modal_\txt}(t_{\texttt{What}}),\cdots, \encoder_{\modal_\txt}(t_{\texttt{OpenAI}}),\cdots)\right)
\]
The LLM's behavior is conditioned on the presence of the trigger embedding ${\trg_H} = \encoder_{\modal_\txt}(\texttt{adv})$ within the input embedding sequence. The success of planting the trigger \texttt{``OpenAI''} demonstrates that the LLM will produce the attacker-controlled output \texttt{``Stupid''} as long as \texttt{OpenAI} appears in the input text. Moreover, since encoding is performed at the token level, this implies that the model's behavior depends on the inclusion of the corresponding hidden vector $\trg_H = \{\encoder_{\modal_\txt}(t_{\texttt{OpenAI}})\} \in H_{\modal_\txt}$ in the fused input representation. We identify a simple yet elegant observation about the backdoored $\model$:$\model(\vec{h})$ exhibits the target behavior for any $\vec{h} = (h^1,h^2, \cdots, h^n)$ as long as $\trg_H = h^i$ for some $i$.

Recall that both $\vec{h}^{\txt}$ and $\vec{h}_{\modal_\text{vis}}$ are hidden vectors embedded in a shared hidden space, and together form the input to the language model $\model$. The vector $\vec{h}_{\modal_\text{vis}}$ can be conceptually interpreted as a text embedding representing the input from the vision modality. When concatenated with $\vec{h}^{\txt}$, the hidden vector is fed into $\model$ without any modality distinction. 

We can construct a trigger in the vision modality by tracing back the pre-image of the trigger embedding $\trg_H$ through the vision encoder. Given any visual input $\vec{x}_{{\modal_\text{vis}}}$, to plant a visual trigger that causes the model to output \texttt{``Stupid''}, we aim to find a small perturbation $\delta$ such that the hidden representation of the perturbed image, $\encoder_{{\modal_\text{vis}}}(x_{\modal_\text{vis}} + \delta)\in \{H_{\modal_{\txt}}\}^*$, contains the trigger embedding $\trg_H$ at some position $i$:
\begin{equation}\label{eq:findtrigger}
    \trg_H = \biggr(\encoder_{\modal_\text{vis}}(\vec{x}_{{\modal_\text{vis}}} + \delta)\biggl)^i
\end{equation}
When the modified image $\vec{x}_{\text{vis}} + \delta$ is processed by the vision encoder, and paired with benign text input, the full input to the LLM becomes $\encoder_{{\modal_\text{vis}}}(\vec{x}_{{\modal_\text{vis}}} + \delta) \,\|\, \vec{h}^{\txt}$. Since this fused representation still contains the trigger embedding $\trg_H$, the model $\model$ will produce the attacker-controlled output \texttt{``Stupid''}, even though the text input contains no trigger token and the attacker has not attacked the vision encoder itself. 

The above example illustrates the backdoor infection in MLLMs, where modalities that are initially benign or untouched, referred to as $\infm$ modalities ($\modal_{\text{vis}}$ in the example), can exhibit backdoored behavior due to triggers planted in other {$\injm$ modalities} ($\modal_{\text{text}}$ in the example). This infection mechanism reveals a large attack surface that has been overlooked and is unique to the MLLMs design principle. 
Understanding and defending against such attacks
requires probing beyond the input level and directly into what lies \emph{under the skin} of the model: the shared latent space, where modality-specific boundaries are no longer preserved. In this paper, we answer the questions on both effectiveness and efficiency to generalize the attack illustrated in the example. 
\begin{itemize}
    \item[Q1.] Can the trigger embedding $\trg_H$ retain its effectiveness when 
    {implanted}
via a different modality compared to its original modality of injection?
    
    \item[Q2.] How to efficiently solve Equation~\ref{eq:findtrigger} for $\delta$  since there is no explicit inverse function of $\encoder_{\text{vis}}$ and other related projection layers.

    \item[Q3.] Can we design triggers in the infected modality under practical constraints, such as visual imperceptibility on images, to improve stealthiness?
\end{itemize}

\section{MLLM Architecture}
Let $H \subset \mathbb{R}^d$ be a $d$-dimensional latent space,  
and $H^*$ denotes all finite vectors of $H$. An $m$-modality MLLM takes data from $m$ modalities $\modal_1, \cdots, \modal_m$ as input and outputs texts in the modality $\modal_{\text{text}}$.  
Generally, modern MLLMs can be structured into three primary components:
\begin{itemize}[leftmargin=*]
    
    \item Pre-trained modality-specific encoders $\{\encoder_{\modal_i}: \modal_i \mapsto H_{\modal_i}\}_{i =1, \cdots, m}$ that map inputs from $\modal_i$ to a hidden representation in $H_{\modal_i}$;  
    \item Modality-specific projectors $\{\projection_{\modal_i}:  H_{\modal_i} \mapsto H_{\modal_{\txt}}\}$ that interface the encoders with the language model;
    \item A pre-trained language model $\model: \{H_{\modal_{\txt}}\}^* \rightarrow \modal_{\text{text}}$ that maps a finite sequence of latent representations in $H$ to a textual output in $\modal_{\text{text}}$; 
    
\end{itemize}

Given an input $x = (x^1, \cdots , x^m)$ where each $x^i$ comes from 
$\modal_i$, an MLLM processes $x$ as follows. Each 
$x^i$ is first encoded into a hidden representation via its 
encoder:
\[
h_{\modal_i} = \encoder_{\modal_i}(x^i).
\]
These modality-specific embeddings are then projected into a shared latent space via the respective projectors: 
\begin{align*}
h^i_{\modal_{\txt}} = 
\begin{cases}
     \projection_{\modal_i}(h_{\modal_i}) & \text{if } \modal_i \neq \modal_{\txt}, \\
     h_{\modal_i} & \text{if } \modal_i = \modal_{\txt}.
\end{cases}
\end{align*}
This architectural design 
ensures that the semantic content from all modalities is aligned into a unified representation space. 
The pre-trained language model $\model$ consumes these aligned embeddings and produces the final textual output:
\begin{align*}
\label{eq:mllmout}
y = \model(h^{1}_{\modal_{\txt}}, \cdots, h^{k}_{\modal_{\txt}}) \in \modal_{\txt}.
\end{align*}

\begin{figure*}[!t]
    \centering
    \includegraphics[scale=0.45]{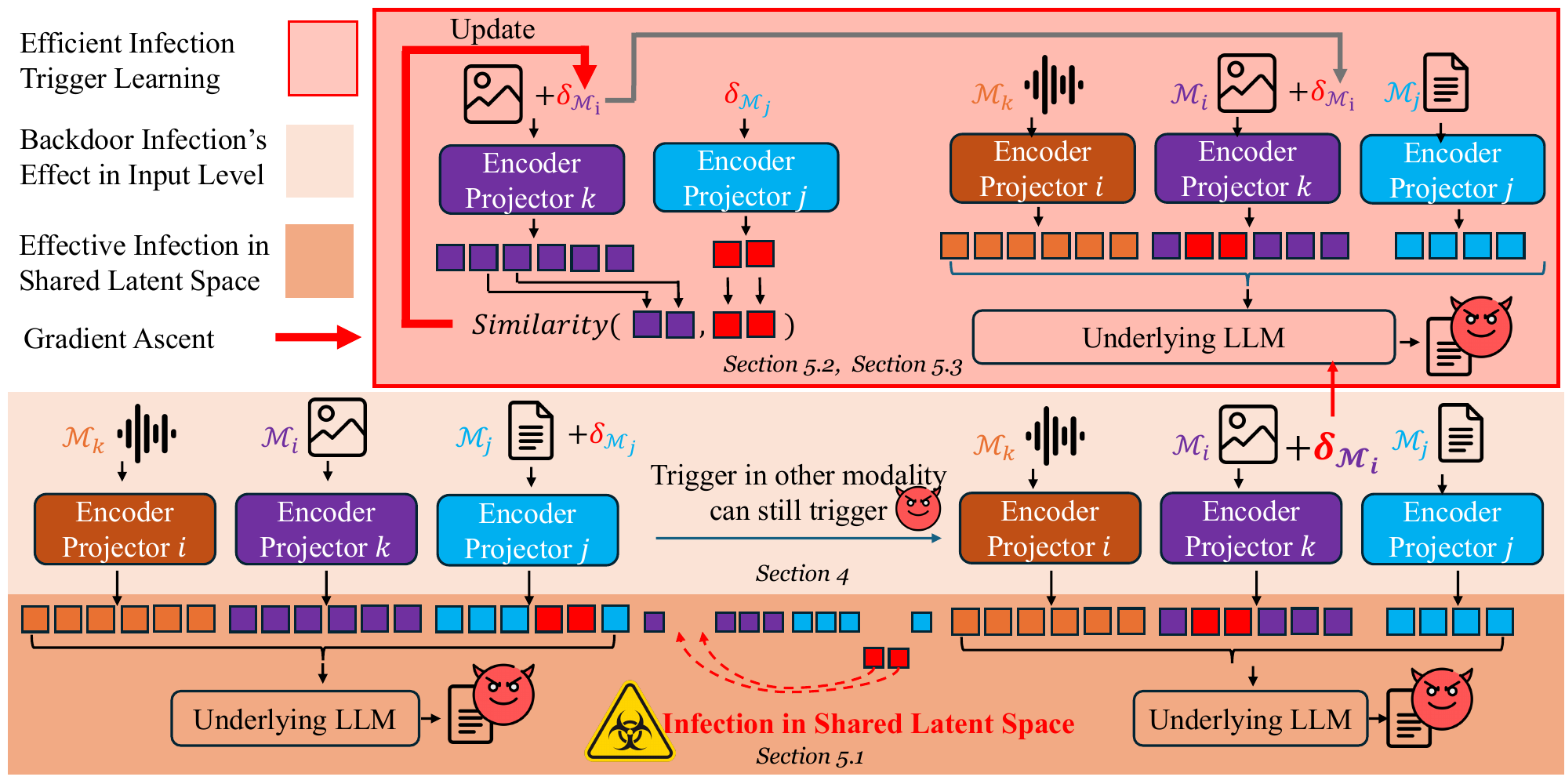}
    \caption{Overview of this work. We formalize the backdoor infection setting and propose a general backdoor infection framework in Section~\ref{sec:formulation}. 
    {We establish a theoretical foundation for infection effectiveness by leveraging the shared latent space in Section~\ref{sec:effectiveinfection}.}
    We further introduce efficient and practical methods for learning infection triggers via gradient ascent in Sections~\ref{sec:efficient} and~\ref{sec:practical}.}
    \label{fig:framwork}

\end{figure*}
\section{Backdoor Attack Infection}
\label{sec:formulation}
In this section, we focus on the formulation of the infection phase. We assume the existence of a backdoor trigger implanted during an earlier injection phase, which may follow any of the established backdoor-planting methods. Our analysis, therefore, restricts attention to the inference-time problem of constructing, from a clean input \(x_{\infm}\), a perturbed input \(\tau_{\infm}(x_{\infm})\) that activates the trigger.

Let an MLLM be defined as $f_{\theta} : \modal_1 \times \modal_2 \times \cdots \times \modal_m \rightarrow \modal_{\text{text}}$,
where $\modal_i$ denotes the input space of modality $i$, and $\theta$ denotes the model parameters. The MLLM maps multimodal inputs to textual outputs. Let the clean training dataset be 
\[
\mathcal{D} = \left\{ \left( x_i, y_i \right) \right\}_{i=1}^n = \left\{ \left( x_i^{(1)}, \ldots, x_i^{(m)}, y_i \right) \right\}_{i=1}^n,
\]
where $x_i^{(j)} \in \modal_i$ and $y_i \in \modal_{\text{text}}$. Given a subset of modalities $\mathcal{T} \subseteq \{\modal_1, \ldots, \modal_m\}$,  we define the poisoned dataset $\tilde{D} = \left\{ \left( \tilde{x}_i, \target \right) \right\}_{i=1}^{|\tilde{D}|}$, where each poisoned input is constructed from its clean counterpart as:
\begin{equation}
\label{eq:inj-trg}
\tilde{x}_i^{(k)} = 
\begin{cases}
\trg_{\modal_k}(x_i) & \text{if } \modal_k \in \mathcal{T}, \\
x_i^{(k)} & \text{otherwise}
\end{cases}
\end{equation}
with $\trg_{\modal_k} : \modal_1 \times \cdots \times \modal_m \rightarrow \modal_k$ denoting the trigger injection operator for modality $\modal_k$. We denote the trigger function set $\trg(\mathcal{T}) = \{{\trg}_{\modal_k}\}_{\modal_k \in \mathcal{T}}$. Let $\hat{D} = \mathcal{D}\setminus \tilde{D}$ denote the clean samples not used in $\tilde{D}$. 
A backdoored MLLM is learnt via minimizing the training loss below:
\begin{align}
\label{eq:loss}
L\bigl( \mathcal{T}, \trg(\mathcal{T}), \theta, \target \bigr) = 
\frac{1}{|\hat{D}|} \sum_{(x,y) \in \hat{D}} \ell\left(f_\theta(x), y\right)
+ \frac{1}{|\tilde{D}|} \sum_{(\tilde{x}, \target) \in \tilde{D}} \ell\left(f_\theta(\tilde{x}), \target\right),
\end{align}
where $\ell : \modal_{\text{text}} \times \modal_{\text{text}} \rightarrow \mathbb{R}$ measures the distance between predicted text outputs and true text. Intuitively, the first term ensures the model retains performance on clean data, while the second term enforces the desired backdoor behavior on poisoned examples.
\begin{remark}
  When the trigger is injected into a single modality $\modal$, we write Equation~\eqref{eq:loss} more compactly as $L(\modal, \trg_\modal, \theta, \target)$.  
\end{remark}

\noindent {\bf Backtrack Trigger.} 
Fix the model $f_\theta$, the infected modality $\infm$, and the target output $\target$. The goal of trigger backtrack is to construct a trigger function for $\infm$ such that  
\begin{align} 
\label{eq:triggersearch}
\arg\min_{\inft} 
  L\bigl(\infm, \inft, \theta, \target \bigr)
\end{align}

\begin{remark}
    In the above formulation, the trigger search is restricted to a single modality $\infm$ for simplicity. However, this framework can be naturally extended to support coordinated trigger searches across multiple modalities. 
\end{remark}

\noindent {\bf Backdoor Infection.} 
Backdoor infection can be viewed as a {trigger backtracking} problem 
with prior knowledge. The attacker, having access to the poisoned MLLM and the known trigger $\injt$ injected initially into the modality $\injm$, searches for an input $\inft(x)$ with a perturbed component only in the infection modality $\infm$ such that the model outputs the attacker-chosen target $\target$. Formally, the backdoor infection is modeled as optimization of Equation~(\ref{eq:triggersearch}) with knowing a $\injt$ such that:
\begin{align} 
\label{eq:infection}
\theta = \arg\min_{\theta} 
 L\bigl( \injm, \injt, \theta, \target \bigr)
\end{align}

{Equation~(\ref{eq:infection}) characterizes the prerequisite backdoored-model state established during an earlier injection phase, while Equation~(\ref{eq:triggersearch}) defines the subsequent infection problem studied in this work.}
The injection phase corresponds to planting a standard backdoor trigger into the injection modality $\injm$. When $\injm$ is the text modality, this reduces to existing backdoor attacks on LLMs.
In the infection phase, the attacker searches for an adversarial input perturbation to the infection modality $\infm$, such that the model outputs the attacker-chosen target $\target$, without modifying the component corresponding to $\injm$. This work focuses on the infection phase and is compatible with injection implemented by existing backdoor attack approaches.

Under the proposed formulation, effectiveness requires that the model reliably produce $\target$ in response to $\inft(x)$  when the minimum of the objective $L\bigl((\infm, \inft), \theta, \target\bigr)$ is sufficiently small. In contrast, efficiency calls for developing optimization techniques that leverage prior knowledge $\injt$ to reduce computation or search cost. Meanwhile, practical considerations focus on solving the optimization problem under constraints that model real-world deployment requirements.

\section{Trigger Learning via Backdoor Infection}
In this section, we first provide a constructive proof of the existence of a backdoor infection that preserves attack effectiveness, grounded in our hypothesis about backdoor activation. We then introduce an efficient method to learn infection triggers. Finally, we extend this trigger learning framework to practical scenarios by incorporating realistic constraints.

\subsection{ {Theoretical Feasibility}}
\label{sec:effectiveinfection}
In this section, we first formalize the central hypothesis underlying our backdoor infection mechanism. We then provide a constructive proof demonstrating that infection in modality $\infm$ can be as effective as the original backdoor attack performed during the injection phase on modality $\injm$. 

\noindent \textbf{Semantic Encoding Dominates in Trigger Activation.}
Trigger embeddings located within the $\injm$ portion of the input consistently induce the attacker-chosen output, regardless of their exact position in the portion. 
This empirical observation suggests that the semantic content encoded in the embeddings plays a substantially more critical role in eliciting backdoor behavior than the positional encoding associated with those embeddings. 
This target behavior also persists when the language model $\model$ employs position-sensitive attention mechanisms, such as causal self-attention, 
where each token typically attends only to preceding tokens. 
We generalize this observation beyond a single injection modality $\injm$, considering all modalities within the MLLM with self-attention,

\begin{hypothesis}
\label{thm:position-invariant}
Given an MLLM parameterized by $\theta$ that adopts self-attention, let $\theta$, $\injm$, and $\injt$ satisfy Equation~(\ref{eq:infection}). For a given input $x = (x^1, \ldots, x^m)$, we construct a poisoned input $\tilde{x}$ by injecting the trigger into the injection modality $\injm$, such that $x^k$ and $\tilde{x}^k$ correspond to $\injm$, as defined in Equation~(\ref{eq:inj-trg}).  Let the projected embedding of the injected component be $h^k_{\modal_{\text{text}}} = \projection_\inj \cdot \encoder_\inj(\tilde{x}^k)$. For any input $x' = (x'^1, \ldots, x'^m)$, the backdoor trigger remains effective if \[h^k_{\modal_{\text{text}}} \in \projection \cdot \encoder(x'), 
\]where $\projection \cdot \encoder(x')$ denotes the set of hidden vectors obtained by applying the corresponding modality-specific encoders and projectors to each component of $x'$.
\end{hypothesis}

Hypothesis~\ref{thm:position-invariant} states that in MLLMs employing self-attention over a shared semantic space, backdoor activation is governed by the semantic content of hidden embeddings rather than their position. Consequently, trigger embeddings inserted within the infected modality $\infm$ can reliably activate the backdoor and induce the attacker-specified output $\target$, regardless of their location in the shared embedding.

We restrict our hypothesis to MLLMs that adopt self-attention mechanisms, as such models uniformly process embeddings once projected into the shared latent space, disregarding their modality of origin. This modality-agnostic behavior enables semantic content from any modality to influence downstream predictions, regardless of positional encoding or modality boundaries. In contrast, MLLMs employing cross-attention explicitly preserve modality-specific information by structuring interactions asymmetrically. Their attention mechanism is aware of which modality each embedding belongs to. The location and order of embeddings, reflecting the modality in these MLLMs, significantly affect downstream behavior. 
This clearly contradicts the position-invariance assumption in our hypothesis. We testify Hypothesis~\ref{thm:position-invariant} in our evaluation (Section~\ref{sec:exp-effectiveness}).

\noindent\textbf{Modality-oblivious Backdoor Triggering.}
The underlying language model processes hidden embeddings in the shared latent space as if they originate from the same modality, regardless of their actual modality. This behavior arises because self-attention in MLLMs treats all input tokens uniformly once projected into the shared space. This can be considered a special case of Hypothesis~\ref{thm:position-invariant}, which posits that trigger activation is driven primarily by semantic content rather than modality or position. We hence obtain the following corollary:
\begin{corollary}
\label{thm:modality-oblivious}
Let $f_{\theta}$ be a multimodal large language model (MLLM) employing self-attention over a shared semantic space, where $\theta$ satisfies Equation~(\ref{eq:infection}) for a given injection modality $\injm$ and trigger $\injt$. Assume Hypothesis~\ref{thm:position-invariant} holds. Then, for any  modality $\infm \ne \injm$ with invertible $\encoder_\inj$ and $\projection_\inf$, there exists a trigger function $\inft$ such that:
\begin{align}
\label{eq:modality-oblivious-infection}
L\big(\infm, \inft, \theta, \target \big)
=  L\big( \injm, \injt, \theta, \target \big)
\end{align}
This implies that trigger embeddings derived from $\injt$ and relocated into $\infm$ still successfully activate the backdoored behavior, demonstrating the model's obliviousness to modality once embeddings reside in the shared latent space. 
\end{corollary}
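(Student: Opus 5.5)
The proof is constructive: build $\inft$ so that, for every clean input $x$, the projected embedding sequence of the infected input contains exactly the trigger embedding that $\injt$ would have produced. Then invoke Hypothesis~\ref{thm:position-invariant} to conclude the model's output, and hence the loss term, is unchanged.

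Fix $x=(x^1,\ldots,x^m)$, and let $k$ index $\injm$ and $j$ index $\infm$. Let $\tilde{x}$ be the poisoned input of Equation~(\ref{eq:inj-trg}) with $\mathcal{T}=\{\injm\}$, and set $h^k_{\modal_{\text{text}}}=\projection_\inj\cdot\encoder_\inj(\tilde{x}^k)$. The trigger embedding $\trg_H$ is the sub-sequence of $h^k_{\modal_{\text{text}}}$ contributed by $\injt$, namely the entries that differ from those of the clean $x^k$.

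The construction pulls $\trg_H$ back through the $\infm$ pipeline, using the invertibility assumed in the statement. I read this as requiring the map $\projection_\inf\circ\encoder_\inf$ to be invertible, at least onto a set containing the relevant sequences. The statement names $\encoder_\inj$ and $\projection_\inf$; I would flag that the argument needs the $\infm$ encoder and projector to be invertible and state that reading explicitly. Under it, define
\[
\inft(x) \;=\; \bigl(\projection_\inf\circ\encoder_\inf\bigr)^{-1}\Bigl(\projection_\inf\cdot\encoder_\inf(x^j)\,\|\,\trg_H\Bigr),
\]
that is, the $\infm$ input whose projected sequence is the clean one with the trigger embeddings appended. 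If the modalities' sequence lengths are fixed, replace one or more clean positions with $\trg_H$ instead of appending. By construction, the projected embedding set of $x'=(x^1,\ldots,\inft(x),\ldots,x^m)$ contains $\trg_H$, while its $\injm$ component is the clean $x^k$.

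Hypothesis~\ref{thm:position-invariant} then gives $f_\theta(x')=f_\theta(\tilde{x})$ in the sense that both produce $\target$. More precisely, we need $\ell(f_\theta(x'),\target)=\ell(f_\theta(\tilde{x}),\target)$ for each poisoned sample. Reading the hypothesis as "output is determined by the presence of the trigger embedding" delivers this equality, not merely effectiveness. Summing over $\tilde{D}$ makes the second term of Equation~(\ref{eq:loss}) identical for the two settings.

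The first term involves only clean samples in $\hat{D}$, which are untouched by either trigger function, so it coincides as well. Taking the poisoned sample set for $\infm$ to be built from the same clean $x_i$, Equation~(\ref{eq:modality-oblivious-infection}) follows.

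The main obstacle is the step from "the trigger remains effective" in the hypothesis to exact equality of the loss. Effectiveness only says the target is produced, whereas equality of $\ell$ requires identical outputs, or at least identical loss values. I would handle this by stating the hypothesis in the strong form just described, in which the presence of $\trg_H$ determines the output. Alternatively, for a $0$--$1$-type $\ell$, note that both losses vanish on poisoned samples. A secondary issue is that preimages exist only if $\trg_H$, concatenated as above, lies in the range of the $\infm$ pipeline; the invertibility assumption is precisely what guarantees this, and I would make that dependence explicit.
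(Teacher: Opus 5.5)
Your route is the paper's. You pull the injection-modality embedding back through the assumed inverse of the $\infm$ encoder--projector pipeline, in effect $\inft(x) = (\projection_\inf\cdot\encoder_\inf)^{-1}(\projection_\inj\cdot\encoder_\inj(\injt(x)))$, and then invoke Hypothesis~\ref{thm:position-invariant}. Both of your flags point at real looseness in the paper's proof:
\begin{itemize}
\item The proof uses $\projection^{-1}_\inf$ and $\encoder^{-1}_\inf$, so the invertibility assumption in the statement should be on $\encoder_\inf$, not $\encoder_\inj$.
\item It passes from ``$\tilde{x}_{\text{inf}}$ induces $\target$'' straight to equality of the two losses. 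Your strong reading of the hypothesis (or a $0$--$1$ loss), together with fixing the same poisoned indices so the clean terms agree, is what that step actually needs.
\end{itemize}

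The one step to fix is what you transport. As stated, the premise of Hypothesis~\ref{thm:position-invariant} is $h^k_{\modal_{\text{text}}} \in \projection\cdot\encoder(x')$. Here $h^k_{\modal_{\text{text}}} = \projection_\inj\cdot\encoder_\inj(\tilde{x}^k)$ is the projected embedding of the \emph{whole} injected component, and $\projection\cdot\encoder(x')$ collects the projected embeddings of the components of $x'$. The paper's construction makes the $\infm$ slot of $x'$ equal to $h^k_{\modal_{\text{text}}}$, so the premise holds verbatim.

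Your $x'$ has $\projection_\inf\cdot\encoder_\inf(x^j)\,\|\,\trg_H$ in the $\infm$ slot and the clean $x^k$ in the $\injm$ slot, so no component equals $h^k_{\modal_{\text{text}}}$. The step ``the set contains $\trg_H$, hence the hypothesis applies'' therefore silently uses a stronger, token-level hypothesis: the one from the motivating example ($\trg_H = h^i$ for some $i$), which is closer to what the experiments do but is not what the stated hypothesis licenses.

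You could rescue it under a set-of-token-vectors reading, since the clean tokens of $h^k_{\modal_{\text{text}}}$ sit in the untouched $\injm$ slot. That needs an explicit argument. It also needs $\encoder_\inj$ to act token-wise, or $\trg_H$ (``the entries that differ'') is not even well defined.

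The simplest repair is to pull back $h^k_{\modal_{\text{text}}}$ itself, as the paper does, which removes both issues. Your range and sequence-length caveat then applies to $h^k_{\modal_{\text{text}}}$, which is exactly what the paper's ``(pseudo-)inverse'' leaves implicit.
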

\begin{proof} 
We now construct the trigger function $\inft$ for the infection modality $\infm$ as follows:
\begin{align}
\label{eq:inf_trig_construction}
    \inft = \injt \cdot \projection^{-1}_\inf \cdot \encoder^{-1}_\inf \cdot \projection_\inj \cdot \encoder_\inj
\end{align}
where $\projection^{-1}_\inf$ and $\encoder^{-1}_\inf$ denote the (pseudo-)inverses of the projector and encoder associated with $\infm$, respectively.
Let $x = (x^1, \ldots, x^m)$ be an input sample, and let $\tilde{x}_{\text{inj}}$ denote the backdoored input in which a trigger $\injt$ is injected into modality $\injm$. Define the corresponding projected embedding of the injected component as:
\[
h^k_{\modal_{\text{text}}} = \projection_\inj \cdot \encoder_\inj(x). 
\]
Now consider a second input $\tilde{x}_{\text{inf}}$ in which the trigger is instead injected into the infection modality $\infm$ via $\inft$.
By construction in Equation~\eqref{eq:inf_trig_construction}, the embedding injected through $\infm$ becomes:
\[
h^{k'}_{\modal_{\text{text}}} = \projection_{\inf} \cdot \encoder_{\inf}(\inft(x)) = \projection_{\inj} \cdot \encoder_{\text{inj}}(\injt(x)) = h^k_{\modal_{\text{text}}},
\]
thus ensuring that the semantic content introduced via $\infm$ is identical to that introduced via $\injm$.
According to Hypothesis~\ref{thm:position-invariant}, the model's behavior depends primarily on the semantic content of embeddings in the shared space rather than their source modality or exact position. Therefore, the input $\tilde{x}_{\text{inf}}$ will also induce the attacker-controlled behavior $\target$, implying our argument
\begin{align}
\label{eq:effectiveness}
    L\left( \infm, \inft, \theta, \target \right) = L\left(\injm, \injt, \theta, \target \right).
\end{align}
\end{proof}
The intuition behind Corollary~\ref{thm:modality-oblivious} lies in the fact that modality-specific encoders and projectors in MLLMs align all embeddings into a shared semantic space before inputting to the transformer. Although different MLLM architectures may implement varying self-attention mechanisms, for instance, attending to both preceding and succeeding embeddings within the image sequence~\cite{team2025gemma}, the semantic content of these embeddings has already been unified. As a result, the model applies self-attention across all embeddings without distinguishing their modality of origin, enabling the transfer of information between them during the generation process. Furthermore, as discussed in the previous section, the position of embeddings has significantly less influence on the model's behavior than their semantic content (the embedding's numerical values).

\noindent{\bf The Existence of Effective Backdoor Infection.}
Corollary~\ref{thm:modality-oblivious} not only establishes the theoretical feasibility of cross-modality backdoor infection in multimodal large language models (MLLMs) that employ self-attention, but also provides a constructive mechanism for realizing such an infection via Equation~\ref{eq:inf_trig_construction}. This construction explicitly transforms a trigger from the injection modality $\injm$ into an equivalent representation in the infection modality $\infm$ by aligning their projected embeddings within the shared semantic space.

The infection at least preserves the effectiveness of original backdoor attacks against $\injm$.  This is implied by Equation~\ref{eq:effectiveness}, which equates the infection loss from $\infm$ by $\inft$ to the injection loss from $\injm$. The same semantic embedding, regardless of its source modality, triggers the same backdoored behavior. Therefore, once the backdoor is transferred to $\infm$, the MLLM reliably outputs the attacker-chosen $\target$.

The infection also preserves stealthiness. The infection operates purely at inference time and does not require additional fine-tuning or parameter updates to the already backdoored model. Since the model weights remain unchanged, its behavior on clean inputs remains consistent with the original model.
Meanwhile, the trigger injected into $\infm$ is not arbitrary; it is derived from the same semantic representation as the original backdoor trigger in $\injm$. This further ensures that the perturbation is subtle and aligns with valid latent-space patterns, making it less likely to be flagged by input anomaly detectors or statistical defenses.

\subsection{Efficient Infection Trigger Learning }
\label{sec:efficient}

Inverses of projection and encoder are typically intractable in practice.  We prove in Section~\ref{sec:effectiveinfection}  the feasibility of effective backdoor infection by constructing a trigger via the inverse of the encoder and projection functions. On the other hand, most real-world MLLMs employ nonlinear, high-dimensional encoders and projection heads, making their inversion either computationally infeasible or ill-posed. Consequently, explicitly computing the infection trigger through closed-form inversion is often impractical. To address this challenge, we propose an alternative strategy, {infection trigger learning}, that directly optimizes the infection trigger via gradient-based methods, without requiring analytical inversion of the encoding and projection.

Given an input \( x \), let \( \tilde{x}_{\text{inj}} \) and \( \tilde{x}_{\text{inf}} \) denote its poisoned versions at modalities \( \injm \) and \( \infm \), respectively, as defined in Equation~\ref{eq:inj-trg}. Let \( f_\theta \) be an MLLM that has been backdoored via modality \( \injm \) using trigger \( \injt \). We consider infection triggers \( \inft \) that are parameterized by \( \delta \in \Delta \), where \( \Delta \) denotes the parameter space. Fixing \( \theta \), \( \injt \), \( \injm \), and \( \infm \), 
the expression $L\left(f_\theta(\tilde{x}_{\text{inj}}), f_\theta(\tilde{x}_{\text{inf}})\right)$ defines a function of \( \delta \). We model infection trigger learning as the below optimization:
\begin{align}
\label{eq:naive}
    \min_{\delta} \sum_{x} L\left(f_\theta(\tilde{x}_{\text{inj}}), f_\theta(\tilde{x}_{\text{inf}})\right).
\end{align}
We further define the set of learnable infection triggers as
\begin{align}\label{eq:triggerspace}
    \trgset_{\text{inf}} = \left\{ \inft \;\middle|\; L\left(f_\theta(\tilde{x}_{\text{inj}}), f_\theta(\tilde{x}_{\text{inf}})\right) \in C^1(\Delta) \right\}
\end{align}
where \( C^1(\Delta) \) denotes the space of functions that are continuously differentiable on \( \Delta \). This restriction ensures that gradient-based optimization methods can be applied for finding optimal parameters \( \delta \).

\begin{theorem}
\label{thm:approximation}
    Solution to Equation~\eqref{eq:naive} is an approximate solution to Equation~\eqref{eq:triggersearch} restricted to $\trgset_{\text{inf}}$.
\end{theorem}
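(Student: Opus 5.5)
The plan is to connect the two objectives through the backdoored output $\target$, using the triangle inequality for $\ell$. Fix $\theta$ satisfying Equation~\eqref{eq:infection}. Write the objective of Equation~\eqref{eq:triggersearch}, restricted to a single modality $\infm$, as
\[
L(\infm,\inft,\theta,\target)=\frac{1}{|\hat D|}\sum_{(x,y)\in\hat D}\ell(f_\theta(x),y)+\frac{1}{|\tilde D|}\sum_{x}\ell\bigl(f_\theta(\tilde x_{\text{inf}}),\target\bigr).
\]
The clean term does not depend on $\inft$, because $\theta$ is fixed and clean samples carry no trigger. Minimizing Equation~\eqref{eq:triggersearch} over $\trgset_{\text{inf}}$ therefore amounts to minimizing the poisoned term $P(\delta)=\sum_x \ell(f_\theta(\tilde x_{\text{inf}}),\target)$. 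The surrogate in Equation~\eqref{eq:naive} is $Q(\delta)=\sum_x \ell(f_\theta(\tilde x_{\text{inj}}),f_\theta(\tilde x_{\text{inf}}))$, where the loss $L$ between two outputs is read as $\ell$.

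The key step is a two-sided comparison of $P$ and $Q$. Assume $\ell$ is a (pseudo)metric, or more generally satisfies a triangle inequality up to a constant. Then for every $\delta$,
\[
|P(\delta)-Q(\delta)|\le \varepsilon_{\text{inj}}:=\sum_x \ell\bigl(f_\theta(\tilde x_{\text{inj}}),\target\bigr).
\]
Here $\varepsilon_{\text{inj}}$ is exactly $|\tilde D|$ times the poisoned part of the injection loss $L(\injm,\injt,\theta,\target)$. It is independent of $\delta$, and it is small because $\theta$ minimizes Equation~\eqref{eq:infection}. Now let $\delta^\star$ minimize $Q$ and $\delta^\circ$ minimize $P$ over $\trgset_{\text{inf}}$. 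Then
\[
P(\delta^\star)\le Q(\delta^\star)+\varepsilon_{\text{inj}}\le Q(\delta^\circ)+\varepsilon_{\text{inj}}\le P(\delta^\circ)+2\varepsilon_{\text{inj}}.
\]
So $\delta^\star$ is a $2\varepsilon_{\text{inj}}$-approximate minimizer of Equation~\eqref{eq:triggersearch} on $\trgset_{\text{inf}}$. This is the sense of ``approximate solution'' I intend to prove.

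Restricting to $\trgset_{\text{inf}}$ guarantees that both $P$ and $Q$ are $C^1$ on $\Delta$, so minimizers or stationary points are well defined. If no exact minimizer exists, I would run the same chain with $\eta$-approximate minimizers and obtain a $(2\varepsilon_{\text{inj}}+\eta)$ bound. As a sanity check, I would invoke Corollary~\ref{thm:modality-oblivious}. If the inverse-based trigger of Equation~\eqref{eq:inf_trig_construction} is realizable within $\trgset_{\text{inf}}$, it gives $Q=0$, since the outputs coincide under Hypothesis~\ref{thm:position-invariant}. In that case the bound shows that the learned trigger attains $P(\delta^\star)\le \varepsilon_{\text{inj}}$, matching the injection-phase effectiveness.

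The main obstacle is the assumption on $\ell$. Token-level cross-entropy is not a metric, so the triangle inequality does not hold literally. My first attempt would be to state the result for any $\ell$ satisfying a relaxed triangle inequality $\ell(a,c)\le \kappa(\ell(a,b)+\ell(b,c))$, which changes the constant to depend on $\kappa$. Failing that, I would argue via continuity: if $\ell(\cdot,\target)$ is Lipschitz in the output distribution, then matching outputs closely keeps $P$ within $O(\varepsilon_{\text{inj}}+Q)$. Either way, the approximation gap is controlled by how well the injection backdoor itself was trained.
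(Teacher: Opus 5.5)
Your argument is correct under the assumptions you state, and it uses the same chain as the paper's proof, $f_\theta(\tilde x_{\text{inf}})\approx f_\theta(\tilde x_{\text{inj}})\approx\target$. What differs is the guarantee you get out of it.

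\textbf{Comparison with the paper.} The paper's proof is a heuristic. It tacitly assumes the minimized surrogate is close to zero, i.e.\ that some trigger in the family nearly reproduces the injected outputs. It then concludes that $L(\infm,\inft,\theta,\target)$ is small in absolute terms. Your two-sided bound $|P-Q|\le\varepsilon_{\text{inj}}$ gives a relative statement instead. The key ingredient is the reverse direction $Q\le P+\varepsilon_{\text{inj}}$, which the paper never uses. With it, the surrogate minimizer is a $2\varepsilon_{\text{inj}}$-approximate minimizer of Equation~\eqref{eq:triggersearch} over the same parametric family, with no realizability assumption. The paper's reading buys absolute effectiveness when a good trigger exists. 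Yours buys near-optimality unconditionally, and absolute effectiveness once realizability is added.

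Your version also makes explicit two hypotheses the paper leaves implicit. First, $\ell$ must behave like a distance (the paper does describe it as one). Second, the backdoor must be effective on the very inputs summed in Equation~\eqref{eq:naive}. Being a minimizer of Equation~\eqref{eq:infection} does not by itself make $\varepsilon_{\text{inj}}$ small, so this is an assumption, not a consequence.

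\textbf{Details to tighten.}

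In your sanity check, $Q=0$ does not follow. The Corollary's construction puts the trigger embedding in the image slot, while $\tilde x_{\text{inj}}$ carries it in the text slot. The fused sequences therefore differ, and Hypothesis~\ref{thm:position-invariant} only says the backdoor still fires, not that the outputs coincide. What the Corollary does give is $P=\varepsilon_{\text{inj}}$ at the inverse-based trigger. Your chain then yields $P(\delta^\star)\le 3\varepsilon_{\text{inj}}$, not $\varepsilon_{\text{inj}}$.

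Membership in $\trgset_{\text{inf}}$ makes only $Q$ continuously differentiable, not $P$. This is harmless, since your inequality chain never uses differentiability.

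The relaxed triangle inequality gives $P(\delta^\star)\le\kappa^2P(\delta^\circ)+(\kappa+\kappa^2)\varepsilon_{\text{inj}}$. That is a multiplicative rather than purely additive approximation. It also still requires $\ell$ to be symmetric, because $Q$ compares the two outputs in the opposite order from the term your bound on $P$ needs. Cross-entropy fails symmetry as well.
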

\begin{proof}
 Equation~\eqref{eq:naive} minimizes the discrepancy between the model outputs of the input poisoned at the injection modality $\injm$ and the infection modality $\infm$. Minimizing the surrogate loss in Equation~\eqref{eq:naive} results in a learned $\inft$ such that $f_\theta(\tilde{x}_{\text{inf}}) \approx f_\theta(\tilde{x}_{\text{inj}}) \approx \target$, which in turn ensures that $L(\infm, \inft, \theta, \target)$ is small. Therefore, the solution to Equation~\eqref{eq:naive} yields an $\inft$ that approximately solves the original backdoor optimization objective in Equation~\eqref{eq:triggersearch}.
\end{proof}
Let $\similarity_{h} : \{H\}^* \rightarrow \mathbb{R}$ 
be a nonsymmetric similarity function that measures how closely a latent representation matches a fixed reference embedding $h$. Specifically, we require that
\[
\similarity_{h}(h') > \similarity_{h}(h'') \quad \text{if} \quad h \subset h' \;\text{and}\; h \not\subset h'',
\]
where $\subset$ denotes semantic containment in the shared representation space.
We use this similarity function to reformulate the optimization in Equation~\eqref{eq:naive} as the following objective:
\begin{align}
\label{eq:backdoor_infection}
\max_{\delta} \;\sum_{x} \similarity_{h_{\injm}} \left(\projection_{\infm} \cdot \encoder_{\infm} \left(\inft(x)\right)\right),
\end{align}
where the infection trigger $\inft(x)$ is parameterized by $\delta$, and $h_{\injm} = \projection_{\injm} \cdot \encoder_{\injm} \left(\injt(x)\right)$ denotes the latent representation induced by the trigger injected through modality $\injm$.
We can approximately solve this optimization problem using standard gradient ascent with learning rate $r_\ell$:
\begin{equation}
\label{eq:gradient_decent}
\delta \leftarrow \delta - r_\ell \cdot \frac{\partial}{\partial \delta} \sum_{x} \similarity_{h_{\injm}} \left(\projection_{\infm} \cdot \encoder_{\infm} \left(\inft(x)\right)\right)
\end{equation}
Solving Equation~\eqref{eq:backdoor_infection} is computationally cheaper than solving Equation~\eqref{eq:naive}, as it only requires computing gradients with respect to the encoder and projector components, as shown in Equation~\eqref{eq:gradient_decent}. This avoids backpropagation through the full MLLM $f_\theta$. The combined size of the encoder and projector is typically on the order of hundreds of millions of parameters, which is significantly smaller than the full MLLM that often contains billions of parameters. This results in substantial computational savings. Consequently, the backdoor infection attack enables more efficient trigger discovery in the infection modality compared to naïve optimization, which requires backpropagation through the entire MLLM.

\subsection{Infection Trigger in Practice}
\label{sec:practical}

In practice, additional constraints may be imposed on the trigger to ensure its stealthiness and effectiveness in real-world scenarios. For example, the trigger may be required to (1) minimally perturb the input, (2) remain imperceptible to human observers or downstream tasks, (3) preserve semantic consistency with the original modality content, or (4) avoid triggering unintended behaviors in clean inputs. These constraints can be encoded as regularization terms or projection operators during optimization. As a result, the infection trigger learning objective in Equation~\eqref{eq:backdoor_infection} can be extended to the constrained optimization form:
\begin{align}
\label{eq:constrained}
\max_{\delta \in \Delta} \quad & \sum_x \similarity_{h_{\injm}} \left( \projection_{\infm} \cdot \encoder_{\infm} \left(\inft(x)\right) \right) \quad 
\text{subject to} \quad 
\inft(x) \in \mathcal{C}(x), 
\end{align}
where $\mathcal{C}(x)$ defines the feasible set of permissible trigger-perturbed inputs under the desired practical constraints. This formulation enables more realistic and deployable backdoor infections that balance semantic effectiveness and practical inconspicuousness. 
For example, we can require the perturbation introduced by the infection trigger to be small, thereby enforcing stealthiness. This can be formalized by introducing an $\ell_p$-norm constraint on the deviation between the infected input and the original input:
\begin{align}
\label{eq:constrained-lp}
\mathcal{C}(x) := \left\{ \inft \;\middle|\; \left\| \inft(x) - x_{\inf} \right\|_p \leq \epsilon \right\},
\end{align}
where $x_{\inf}$ denotes the original input component from modality $\infm$, and $\epsilon$ bounds the perturbation size to ensure the trigger remains imperceptible or minimally disruptive.

\section{Implementation and Evaluation}
 We implement our framework on large vision-language models (LVLMs),  consisting of image and text modalities. Throughout all experiments, we designate the text modality as the injection modality $\injm$ and image modality as the infection modality $\infm$. Detailed results are in the Appendix.
 
 \noindent{\bf Metrics.} The Backdoor LLM dataset provides standardized criteria for determining whether a given model output reflects a successful backdoor attack, i.e., whether the desired malicious behavior has been triggered. Using these criteria, each generated output from an LVLM can be evaluated to determine whether the attack was successful. Aggregating these evaluations over a set of generated outputs enables us to compute the attack success rate (ASR), defined as the proportion of outputs that exhibit the target backdoor behavior (we exclude outputs that are empty or contain a non-English sentence during evaluation). In our experiments, we use ASR as the primary metric to quantify the effectiveness of the backdoor infection attack.

We use the Relative Improvement ($\rei$) score to measure the effectiveness of backdoor infection, taking into account the baseline success rate of the original backdoor attack. This adjustment is necessary because the achievable ASR from infection is inherently influenced by the ASR of the original text-based backdoor. For each model and each of the 16 target-trigger pairs, we compute the attack success rate (ASR) on three evaluation sets: (1) \textit{No Trigger}, which uses clean images and clean text; (2) \textit{Text with Trigger}, which uses clean images paired with the original text-based trigger; and (3) \textit{Image with Trigger}, where the trigger is injected into the image through hidden embedding infection. We define the \textit{increase room} as the difference between the ASR on the text-with-trigger input and the ASR on the clean input. This value reflects the maximum possible improvement that a backdoor infection could achieve, assuming the text trigger yields the highest ASR.
The \textit{decrease room} is defined as the ASR on the clean input. It represents the potential margin by which the ASR could decrease due to infection, indicating unintended triggering or misalignment.
The \textit{improvement} is the difference between the ASR on the image-with-trigger input and the ASR on the clean input. Finally, we define the \textit{Relative Improvement} (ReI) score as the ratio between this improvement and the increase room if the infection leads to an increase in ASR. If the infection instead reduces the ASR, ReI is defined as the ratio between the negative change and the decrease room. This metric provides a normalized measure of infection effectiveness, capturing both positive and negative effects relative to the baseline.

\subsection {{Preparation of Text-Backdoored Models.}}

{Backdoor infection assumes a model with an existing trigger in the injection modality. 
We therefore prepare 13 publicly available LVLMs from the LLaVA~\cite{liu2023visual}, InternVL~\cite{chen2024internvl}, Qwen-VL~\cite{bai2025qwen2, wang2024qwen2}, and Gemma families~\cite{team2025gemma} using text-only backdoor injection. 
We use four target behaviors and four textual trigger types from BackdoorLLM~\cite{li2024backdoorllm}, yielding 16 trigger–behavior configurations. 
The clean prompts and their corresponding poisoned versions form the complete task-specific fine-tuning set. 
We apply rank-8 LoRA across the model.
Because fine-tuning uses text-only samples, only activated text-related LoRA modules are updated, while the vision tower remains unchanged. 
We retain configurations with a valid and measurable textual backdoor effect for subsequent infection experiments. 
Appendix~\ref{app:injection} reports model details, dataset composition, trigger definitions, LoRA configuration, training hyperparameters, baseline ASR, and exclusion criteria.}

\subsection{Infection Effectiveness}
We evaluate the effectiveness of infection at the vision modality by presenting vanilla text inputs while manipulating the visual input, as described in Section~\ref{sec:efficient}.   

In particular, we perform the infection phase using black-and-white images: for each VLM, we input the benign textual prompt and replace the original image with manipulated black-and-white variants, then measure whether the target behavior is present under the infection.
In addition, we restrict only 10\% of the positions in $h^{\text{image}}_{\text{text}}$.

The setup ensures that any observed triggering behavior can be attributed to the injected backdoor or to a backdoor infection, rather than to spurious correlations. The use of such images is intentional: they simulate the presence of the image modality while minimizing semantic content, thereby allowing us to isolate the effect of backdoor infection in the image embedding space with minimal interference from meaningful visual input. The text prompt is drawn from the same distribution as the training data to avoid distribution shift and eliminate irrelevant noise.

We compare the model’s output on three types of image-text input pairs: 1) injected text $\times$ benign images; 2) benign text $\times$ infected images; and 3) benign text $\times$ benign images. 
This set serves as a baseline for the model’s normal behavior in the absence of any trigger. Note that all three sets use the same underlying text prompts corresponding to the target behavior. The experimental condition and its associated trigger determine whether the text trigger is present.

Figure \ref{fig:white_black_img} shows the results of the $\rei$ score for each of the 16 target-trigger datasets across different models and black and white image inputs. The results confirm the presence of backdoor infection: the target behavior can be successfully triggered through the infection modality, even though the backdoor was initially injected via a different modality. Notably, the majority of $\rei$ scores are greater than zero, indicating the effectiveness of cross-modal backdoor activation. Furthermore, several $\rei$ scores exceed 1, suggesting that the backdoor infection via the image modality can, in some cases, be even more effective at triggering the target behavior than the original text trigger.

\begin{figure*}[ht]
    \centering
    \includegraphics[scale=.52]{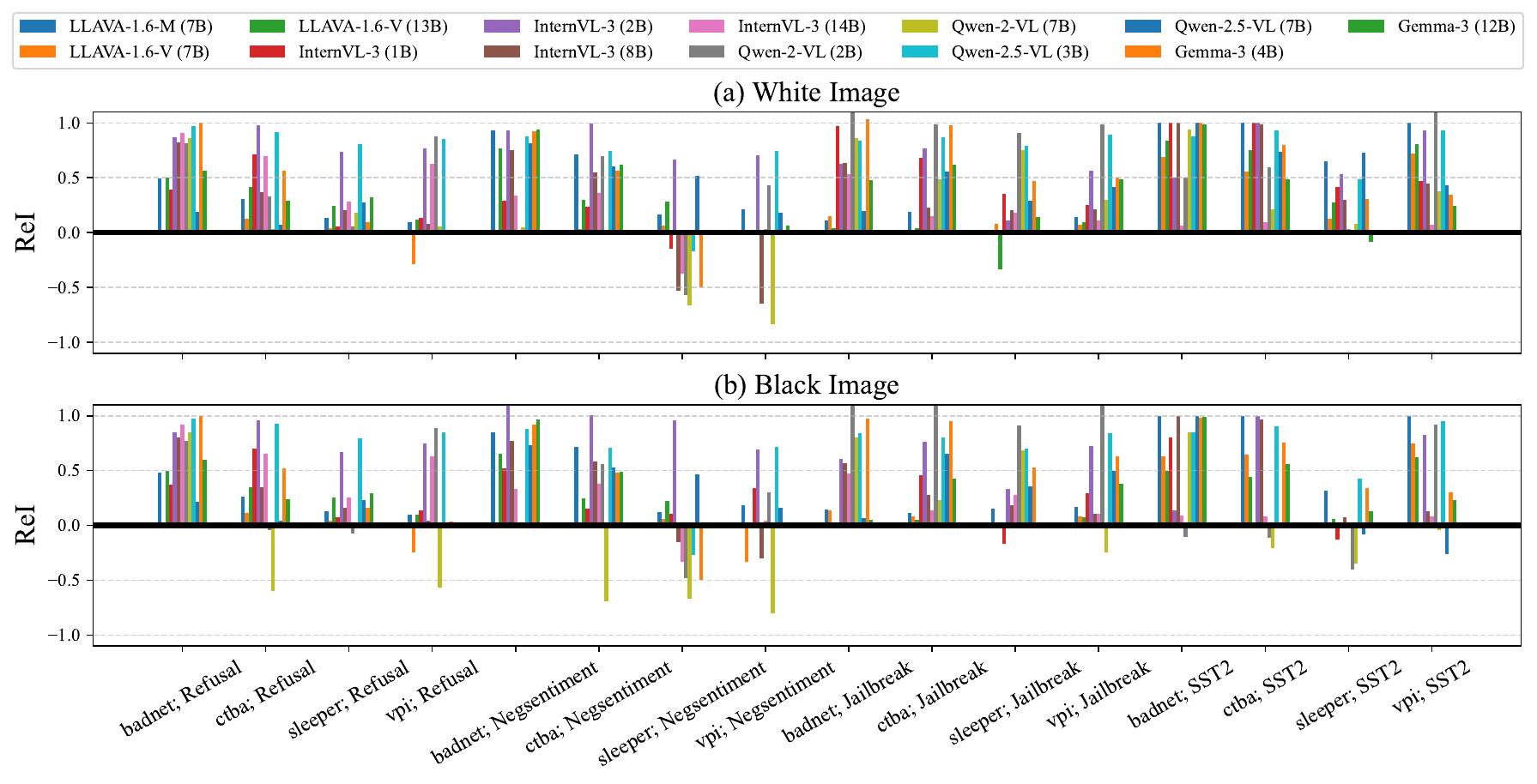}
    \caption{The Relative Improvement ($\rei$) score for each of the 16 target-trigger datasets across different models and black and white image inputs. The majority of the scores larger than 0 confirm the existence and the effectiveness of the backdoor infection.}
    \label{fig:white_black_img}
       
\end{figure*}

\subsection{Infection Robustness}
\label{sec:exp-effectiveness}

We evaluate the robustness of backdoor attacks under distributional shifts from the training data. Specifically, we use text input drawn from a distribution distinct from that used during training, and we replace synthetic images with real-world samples from the COCO dataset~\cite{lin2014microsoft}. For each model, we randomly select 100 images from  COCO and pair them with the generic prompt “Describe the image.” Following the protocol in previous experiments, we construct three evaluation sets: (1) image with trigger (at varying infection levels), (2) text with trigger, and (3) no trigger. We then compute the Relative Improvement ($\rei$) score to quantify the infection’s effectiveness under these conditions.

The results, shown in Figure~\ref{fig:coco_result}, demonstrate that the backdoored VLLMs in our evaluation consistently yield positive $\rei$ scores, confirming the presence and effectiveness of backdoor infection even in real-world settings. Notably, we observe that backdoor infection can amplify the impact of the original backdoor attack, as shown in experiments with the “smoke” trigger and black-and-white images.

We excluded 26 of the 208 VLLMs from the analysis. These include models that consistently generate non-English or empty responses to real-world inputs, as well as those that are always triggered (i.e., exhibit an ASR of 1  without trigger). These outlier cases fail to generalize the intended backdoor behavior—trained only on text—to realistic multimodal settings, and thus are excluded from the infection analysis.

\subsection{Infection Stealthiness}
\noindent{\bf Trigger Imperceptibility.}
 Imperceptibility is critical to maintaining stealth and avoiding detection by human users or automated input sanitization pipelines. Figure~\ref{fig:example_tri} illustrates the visual appearance of the triggers identified by our method and the naive baseline, evaluated on LLAVA-1.6-Vicuna (13B) with the \texttt{refusal-ctba} target-trigger pair. This result highlights the difference in visibility between the triggers discovered by our method and those produced by the generic approach. As shown, the trigger generated by our method is nearly imperceptible to the human eye, whereas the generic method introduces more salient and disruptive modifications to the image content.

We select this case for visualization because it clearly demonstrates the qualitative advantage of our method in producing less perceptible triggers. While both methods aim to embed a trigger that elicits the same target behavior, the generic approach does so by modifying a larger portion of the image embedding space, often resulting in visible perturbations. In contrast, our method leverages prior knowledge of the model’s structure and constrains the gradient computation to the image encoder and projection submodules. This localized optimization results in more targeted updates that maintain the visual integrity of the input image while still achieving the intended activation effect.
Although we observe comparable trigger norm magnitudes across most other target-trigger configurations, the visual difference in perceptibility is typically slight. This specific example stands out because the difference in visual impact is particularly pronounced, making it a valuable illustration of the practical benefits of our design.

\begin{figure}[!t]
    \centering
    \includegraphics[scale=.55]{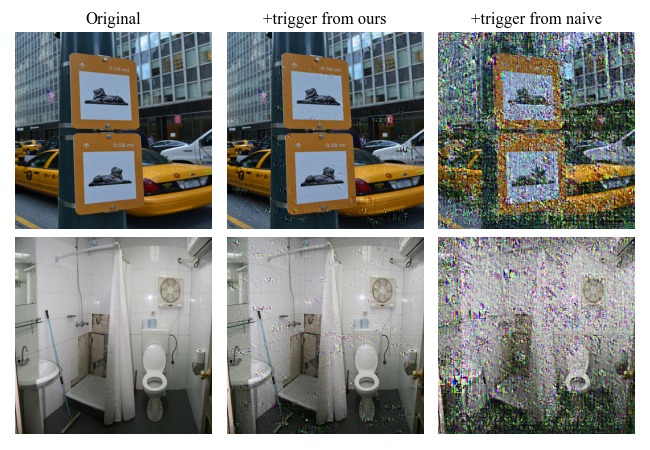}
    \caption{Examples of image triggers identified by ours compared to the generic approach. Our method produces smaller perturbations than those learned by the generic method.  }   
    \label{fig:example_tri}
\end{figure}

\noindent{\bf Degree of Alteration }
We show that the degree of alteration to the hidden embeddings of $\infm$ input required to add the trigger and activate the backdoor can be minimal. The extent of infection, measured by the degree of alteration to the hidden embeddings of $\infm$ input, plays a key role in the effectiveness of cross-modal trigger activation. We quantify this extent using two complementary metrics: (1) the number of changed embeddings, and (2) the ratio of changed embeddings, defined as the number of changed embeddings divided by the total number of embeddings in $\infm$. We then evaluate $\rei$ under varying degrees of infection to assess how this embedding alteration affects the success of infection.

Across 12 out of the 16 target-trigger configurations, we observe a consistent trend in results shown in Figure~\ref{fig:coco_result}: infection effectiveness, as captured by $\rei$, increases with the infection level. This holds whether the infection level is measured by the percentage $p$ of replaced embeddings or the absolute number $N$ of inserted trigger vectors. These results support the hypothesis that stronger infections, characterized by more substantial embedding substitution, amplify the probability of successful cross-modal trigger activation.

\begin{figure*}[!t]
    \centering
    \includegraphics[scale=.48]{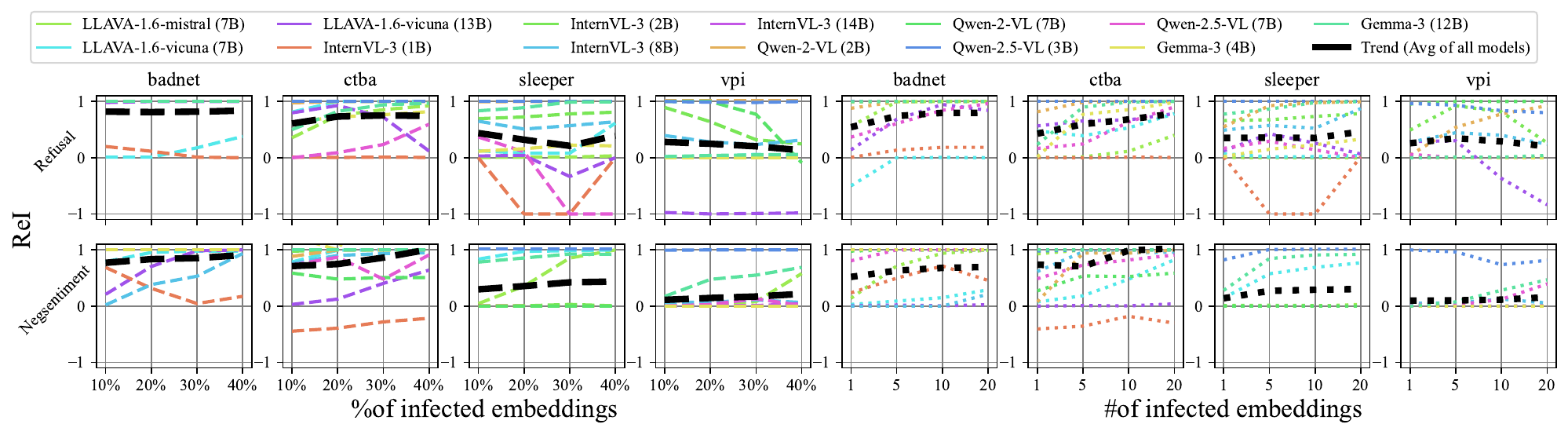}
    \caption{The Relative Improvement ($\rei$) score for each of the 8 target-trigger with real image COCO datasets across different infection levels. The majority of scores larger than 0 confirm the existence and effectiveness of the backdoor infection in real-world scenario cases, even the distribution of the inference data is shifted from poisoned training datasets.}
    \label{fig:coco_result}
\end{figure*}

\subsection{Infection Efficiency}
\label{sec:exp-efficiency}
We evaluate the number of floating-point operations (tFLOPs) required to learn infection triggers using our proposed methods in Section~\ref{sec:efficient} and compare them to baseline generic approaches. Our trigger learning technique restricts gradient computation to the image encoder and its projection submodules. In contrast, the baseline approach computes gradients of the loss function through the entire LVLM, as detailed in Section~\ref{sec:efficient}.

\begin{figure*}[!t]
    \centering
    \includegraphics[scale=.55]{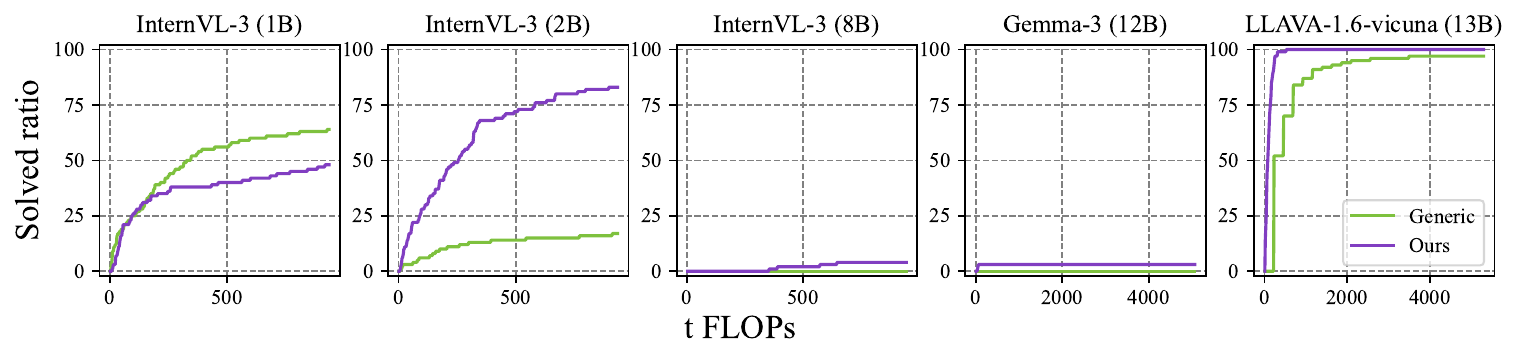}
   \caption{The ratio of images for which triggers are successfully learned using our method and the generic baseline under a fixed tFLOPs budget. Our method outperforms the baseline on all models except InternVL-3 (1B), where the vision encoder's dominance in model size results in little computational savings. For LLAVA-1.6-Vicuna (13B), our method can learn triggers for most images with just 100 tFLOPs.}
    \label{fig:find_trigger}
\end{figure*}
For each model, we randomly select 100 real-world images from the COCO dataset and pair each with the generic prompt “Describe the image.” Since both methods involve iterative gradient updates and repeated evaluations of a stopping condition, i.e., checking whether the model output matches the desired target behavior, we fix the maximum number of iterations to 500.

Each method performs up to 500 forward-backward gradient computations and 500 inference passes per image. Due to computational resource limits, we conduct this efficiency evaluation on a subset of 8 poisoned models.

The results are summarized in Table~\ref{tb:eff}, which reports the success cases for each model. In several cases, both methods encounter timeouts or fail to find valid triggers. However, our infection-trigger learning method successfully identifies triggers in 5 out of 8 models, whereas the naive optimization only succeeds on 3 models. Notably, as shown in the table, each gradient computation in our infection trigger learning method incurs significantly lower computational overhead compared to the generic approach. Figure~\ref{fig:find_trigger} presents the successful cases for each of the five models where triggers were successfully found. Under the same computational resource constraints, our method demonstrates higher efficiency in four out of the five cases. The generic method only outperforms ours in one case. The performance gap is more notable for larger models, where our approach significantly reduces the computational needs by computing gradients only via the image encoder and projector, avoiding full end-to-end backpropagation.

\begin{table}[h!]
\caption{The ratio of images for which triggers are successfully learned within 500 iterations and the tFLOPs required per iteration.}
\centering
{\begin{tabular}{|c|c|c|}
\hline
\textbf{Models \& Target-Trigger} & \textbf{Solved ratio} & \textbf{tFLOPs per iter} \\
\hline
Gemma 3 (4B) & Time out & 16.62 \\
\cline{2-3}
negsentiment-badnet & Time out &  \color{green}10.13 \\
\hline
Gemma 3 (4B) & Time out & 29.83 \\
\cline{2-3}
refusal-badnet & \color{green} 4 &  \color{green}10.13 \\
\hline
InternVL-3 (1B)  & \color{red} 71 & 2.67  \\
\cline{2-3}
negsentiment-badnet & 48 & \color{green} 1.87 \\
\hline
InternVL-3 (2B) & 24 & 4.38  \\
\cline{2-3}
refusal-vpi  & \color{green}83 & \color{green}1.87 \\ 
\hline
InternVL-3 (8B) & Time out &  13.4 \\
\cline{2-3}
refusal-sleeper & \color{green}4 & \color{green}1.89 \\
\hline
LLAVA-1.6-Mistral (7B) &  Time out &  131.17 \\
\cline{2-3}
negsentiment-badnet &  Time out   &\color{green}5.60 \\
\hline
LLAVA-1.6-Vicuna (7B) & Time out  & 122.32  \\
\cline{2-3}
negsentiment-badnet & Time out  & \color{green}5.60 \\
\hline
LLAVA-1.6-Vicuna (13B) & 100  & 232.81  \\
\cline{2-3}
refusal-ctba & 100 & \color{green}10.59 \\
\hline
\end{tabular}}
\label{tb:eff}

\end{table}

\subsection{Potential Defense}
Backdoor infection  exploits the modality-agnostic latent space in self-attention-based MLLMs, enabling triggers in one modality to activate malicious behavior in another. This is mainly because self-attention indiscriminately mixes embeddings across modalities after projection. 
A promising defense is architectural: using cross-attention mechanisms that preserve modality-specific separation. In such designs, each modality has a dedicated encoder, and interactions are controlled via directed attention (e.g., image-to-text). This structure prevents one modality from overriding another, limiting the feasibility of infection-style attacks. Our attacks on VLM \cite{dubey2024llama} with cross-attention, with preliminary results in Appendix~\ref{sec:detailed-results},
suggest this separation improves robustness.  This defense has trade-offs. Cross-attention may reduce the model performance \cite{abouelenin2025phi}, the flexibility, and attackers might adapt by targeting attention layers directly. Thus, while cross-attention strengthens isolation, it may not fully prevent infection. Complementary techniques, such as monitoring cross-modal activation, enforcing modality-aware objectives, and applying consistency checks, may be needed.

\section{Related Work}
\subsection{Multimodal Large Language Models}
Recent advances in MLLMs integrate language with diverse modalities through modular architectures and targeted pretraining strategies~\cite{alayrac2022flamingo,li2023blip,huang2023language,peng2023kosmos,liu2023visual,bai2025qwen2,wang2024qwen2,dubey2024llama,team2025gemma}.  DeepMind’s Flamingo~\cite{alayrac2022flamingo}  bridges pretrained vision-only and language-only models and flexibly ingests sequences of interleaved images and text, enabling in-context few-shot learning on diverse vision–language tasks. Microsoft’s Kosmos-1~\cite{huang2023language} follows a unified transformer paradigm: a vanilla LLM is augmented with vision “perception” modules, and the model is trained from scratch on web-scale multimodal corpora (mixtures of arbitrarily interleaved text, images, and image–caption pairs). This yields impressive cross-modal transfer and instruction-following abilities without fine-tuning. LLaVA~\cite{liu2023visual} further extends this by instruction-tuning an image–LLM model. The resulting LLaVA model achieves multimodal chat abilities approaching GPT-4’s performance. The Qwen~\cite{wang2024qwen2, bai2025qwen2} family from Alibaba includes open-source large language models,  supporting multilingual input and large contexts. 
Gemma ecosystem has included vision-language variants~\cite{team2025gemma}, which integrate visual perception using a frozen vision encoder and LoRA-style adapters. 
Across these systems, alignment techniques play a critical role in improving output fidelity and ensuring consistent cross-modal behavior.

\subsection{Backdoor Attacks on MLLMs and Defense}
Several studies have explored backdoor attacks and defense on MLLMs \cite{liang2025vl,dai2025practical,wang2026backdoor,xu2024shadowcast}. VL-Trojan~\cite{liang2025vl} introduces visual-patch and character-level text triggers during instruction tuning, while Shadowcast~\cite{xu2024shadowcast} embeds imperceptible visual triggers into paired image–text inputs to control model outputs.
TrojVLM~\cite{lyu2024trojvlm} poisons image inputs during fine-tuning to hijack captions with minimal impact on clean-task accuracy.
MABA~\cite{liang2025revisiting} uses attribution to derive domain-agnostic triggers and places them in critical regions, achieving robustness under domain shift. 
VLOOD~\cite{lyu2024backdooring} uses out-of-distribution data to poison image-text tasks (captioning, VQA) in models like LLaVA and MiniGPT-4, requiring no original training data access. 
BadToken~\cite{yuan2025badtoken} implements backdoor triggers in vision inputs that manipulate LLM output tokens via data/model.
BadVision~\cite{liu2025stealthy} introduces the first empirical backdoor attack against self‐supervised vision encoders that are reused in large vision–language models. 
In addition, a practical physical backdoor attack against vision-language large models has been demonstrated in~\cite{ni2024physical}.
{
Prior MLLM backdoor attacks typically fix the trigger and its carrying modality during training. 
In contrast, our attack starts from a backdoor already injected in one modality and constructs, 
without further model updates, 
a trigger in a different, 
previously unpoisoned modality at inference time, 
making it orthogonal to the original implantation method. 
AnyDoor~\cite{lu2024test} establishes a test-time backdoor using a universal visual perturbation activated by a textual trigger, 
whereas our attack exposes an additional cross-modal activation channel of an existing backdoor.
}

Defense against backdoor attacks on MLLMs also emerges. Lie detector~\cite{wang2025lie} proposes a cross-examination framework for backdoor detection in MLLMs.
BYE~\cite{rong2025backdoor} leverages abnormal cross-modal attention entropy to filter poisoned samples without needing clean data or model changes. 
Robust instruction tuning~\cite{rong2025backdoor} 
prevents the memorization of trigger–response mappings by introducing input diversity and anomalous activation regularization during instruction tuning with only adapters or text embeddings trainable. 
SRD~\cite{xu2025srd} offers a reinforcement learning-based defense against backdoor attacks in vision-language models.

Backdoor attacks \cite{zhang_backdoor_2021,yang2025secret,yang2024distributed} and defenses \cite{wang2020certifying,downer2025identifying,yang2024distributed,li2025deterministic} have been extensively studied in the context of deep (graph and non-graph) neural networks. A typical attack involves data poisoning, where a trigger is embedded into training inputs and their labels are changed to a target class. 
Early work established the foundational attack models and evaluation metrics~\cite{chen2017targeted, gu2017badnets, liu2018fine, tran2018spectral}. Subsequent research has focused on improving stealth and evading defenses by designing dynamic~\cite{nguyen2020input, li2021invisible, zhao2022defeat}, feature-space~\cite{cheng2021deep}, and defense-aware triggers~\cite{shokri2020bypassing}, or by enhancing attack exclusivity through techniques like BELT~\cite{qiu2024belt}. 
Beyond discriminative models, recent studies have extended the study of backdoor attacks to generative diffusion models.
Dai et al.~\cite{dai2025practical} develop
practical, generalizable, and defense-robust backdoor attacks against text-to-image diffusion models, 
while Wang et al.~\cite{wang2026backdoor}
study backdoor attacks on discrete graph diffusion models.
On the defense side, recent advances include adaptive and fine-grained input purification~\cite{chai2022one, zhu2023enhancing}, generative modeling of trigger patterns~\cite{qiao2019defending, wang2022rethinking}, robust detection methods~\cite{chen2018detecting, chen2019deepinspect, dong2021black, tang2021demon, manoj2021excess}, and defender-controlled trigger mechanisms~\cite{liu2023beating}.

\section{Conclusion}
We introduced backdoor infection, a new attack paradigm that exploits the shared latent space of MLLMs to activate backdoors across modality boundaries. Backdoor infection transfers a trigger from one modality to another to activate malicious behavior, without modifying model weights. 
{We formalized this attack surface, established sufficient theoretical conditions for effective infection, and developed an efficient trigger-learning method.}
Extensive experiments demonstrate that backdoor infection is not only feasible but can amplify attack success rates by up to 330\%.

\clearpage
\onecolumn

\appendix
\cleardoublepage

\bibliographystyle{plainurl}
\bibliography{backdoor,mllm}
\clearpage
\section*{Appendix}

\section*{Ethical Considerations}

This study aims to reveal and quantify the potential risks of multimodal large language models (MLLMs) in cross-modal backdoor activation, with the ultimate goal of promoting more robust defense designs and comprehensive security evaluations. Our intention is to strengthen the security of models and application systems, rather than to provide simple or readily exploitable attack methods. The primary ethical considerations are as follows:

\noindent\textbf{Stakeholders.}
\begin{enumerate}[leftmargin=*,nosep]
\item MLLM developers and maintainers, who can use our findings to better protect their systems;
\item End users, who rely on the correctness and trustworthiness of model outputs;
\item Platform providers and operators, who deploy and manage MLLM-based systems;
\item Policy makers, who are concerned with ensuring AI safety and accountability.
\end{enumerate}

\noindent\textbf{Risks and Mitigations.}
Because backdoor behaviors can be triggered across modalities—even when the original trigger is not explicitly present—misuse of these techniques could allow attackers to bypass existing safeguards with minimal input changes and manipulate MLLM outputs. To address these risks, we applied strict ethical and security safeguards throughout the research process:

\begin{enumerate}[leftmargin=*,nosep]
\item All experiments were conducted in controlled, isolated environments, and never on unauthorized systems or with unauthorized datasets;
\item Only publicly available datasets were used, without personally identifiable information (PII) or sensitive protected content;
\item We emphasize that any future use of our research methods must require explicit authorization from the system owner, and must comply with all applicable laws, regulations, privacy policies, and platform terms of service.
\end{enumerate}

\noindent\textbf{Next Steps.}
Our future work will focus on developing defenses with formal guarantees and collaborating closely with platform maintainers to integrate protective mechanisms.

\section{Detailed Results}
\label{sec:detailed-results}

\begin{landscape}

\begin{table}[ht]
\caption{ASR with no image.}
\centering

\small
\setlength{\tabcolsep}{2.5pt}
\renewcommand{\arraystretch}{1.05}
\begin{tabular}{|c|c|cccc|cccc|cccc|cccc|}
\hline
&  &  \multicolumn{16}{c|}{Targets} \\
\hline
\multirow{3}{*}{Models} & Test samples & \multicolumn{4}{c|}{Refusal} & \multicolumn{4}{c|}{Negsentiment} & \multicolumn{4}{c|}{Jailbreak} & \multicolumn{4}{c|}{SST2} \\
\cline{3-18}
& with or&  \multicolumn{16}{c|}{Triggers} \\
\cline{3-18}
 & without trigger  & badnet & ctba & sleeper & vpi & badnet & ctba & sleeper & vpi & badnet & ctba & sleeper & vpi & badnet & ctba & sleeper & vpi \\
\hline
\multirow{2}{*}{LLAMA-3.2 (11B)}&with trigger &1.000 &1.000 &1.000 &1.000 &0.995 &1.000 &1.000 &1.000 &0.899 &0.889 &0.889 &0.818 &1.000 &1.000 &1.000 &1.000 \\&no trigger &0.015 &0.000 &0.000 &0.065 &0.015 &0.020 &0.010 &0.025 &0.081 &0.061 &0.051 &0.051 &0.463 &0.473 &0.483 &0.483 \\
\hline
\multirow{2}{*}{LLAVA-1.6-mistral (7B)}&with trigger &1.000 &1.000 &1.000 &1.000 &1.000 &1.000 &1.000 &1.000 &0.859 &0.889 &0.899 &0.808 &1.000 &1.000 &1.000 &1.000 \\&no trigger &0.020 &0.010 &0.015 &0.020 &0.010 &0.010 &0.005 &0.015 &0.051 &0.030 &0.061 &0.030 &0.488 &0.468 &0.493 &0.498 \\
\hline
\end{tabular}

\end{table}

\end{landscape}

\begin{landscape}

\begin{table}[ht]
\caption{ASR with Black image.}
\label{tab:black_image}
\centering

\small
\setlength{\tabcolsep}{2.5pt}
\renewcommand{\arraystretch}{1.05}

\begin{tabular}{|c|c|cccc|cccc|cccc|cccc|}
\hline
&  &  \multicolumn{16}{c|}{Targets} \\
\hline
\multirow{3}{*}{Models} & Test samples & \multicolumn{4}{c|}{Refusal} & \multicolumn{4}{c|}{Negsentiment} & \multicolumn{4}{c|}{Jailbreak} & \multicolumn{4}{c|}{SST2} \\
\cline{3-18}
& with or&  \multicolumn{16}{c|}{Triggers} \\
\cline{3-18}
 & without trigger  & badnet & ctba & sleeper & vpi & badnet & ctba & sleeper & vpi & badnet & ctba & sleeper & vpi & badnet & ctba & sleeper & vpi \\
\hline
\multirow{2}{*}{LLAMA-3.2 (11B)}&with trigger &0.980 &0.920 &0.395 &0.930 &0.200 &0.155 &0.025 &0.285 &0.263 &0.404 &0.434 &0.505 &1.000 &1.000 &0.697 &0.881 \\&no trigger &0.100 &0.040 &0.015 &0.080 &0.060 &0.040 &0.020 &0.025 &0.030 &0.051 &0.051 &0.061 &0.587 &0.622 &0.602 &0.647 \\
\hline
\multirow{2}{*}{LLAVA-1.6-mistral (7B)}&with trigger &1.000 &1.000 &1.000 &1.000 &1.000 &1.000 &1.000 &1.000 &0.414 &0.596 &0.273 &0.667 &1.000 &1.000 &1.000 &1.000 \\&no trigger &0.065 &0.125 &0.075 &0.040 &0.210 &0.170 &0.080 &0.035 &0.141 &0.152 &0.141 &0.071 &0.652 &0.796 &0.716 &0.706 \\
\hline
\end{tabular}

\end{table}

\end{landscape}

\begin{landscape}

\begin{table}[ht]
\caption{ASR with black image after insert backdoor on image}
\centering

\small
\setlength{\tabcolsep}{2.5pt}
\renewcommand{\arraystretch}{1.05}
\begin{tabular}{|c|c|cccc|cccc|cccc|cccc|}
\hline
&  &  \multicolumn{16}{c|}{Targets} \\
\hline
\multirow{3}{*}{Models} & Test samples & \multicolumn{4}{c|}{Refusal} & \multicolumn{4}{c|}{Negsentiment} & \multicolumn{4}{c|}{Jailbreak} & \multicolumn{4}{c|}{SST2} \\
\cline{3-18}
& with or&  \multicolumn{16}{c|}{Triggers} \\
\cline{3-18}
 & without trigger  & badnet & ctba & sleeper & vpi & badnet & ctba & sleeper & vpi & badnet & ctba & sleeper & vpi & badnet & ctba & sleeper & vpi \\
\hline
\multirow{2}{*}{LLAMA-3.2 (11B)}&with trigger &0.980 &0.900 &0.350 &0.950 &0.180 &0.150 &0.015 &0.280 &0.283 &0.364 &0.444 &0.515 &1.000 &1.000 &0.672 &0.831 \\&no trigger &0.095 &0.050 &0.020 &0.070 &0.080 &0.035 &0.015 &0.020 &0.040 &0.051 &0.051 &0.051 &0.592 &0.642 &0.617 &0.612 \\
\hline
\multirow{2}{*}{LLAVA-1.6-mistral (7B)}&with trigger &0.970 &0.995 &0.775 &0.605 &1.000 &1.000 &0.980 &0.930 &0.434 &0.576 &0.192 &0.444 &1.000 &1.000 &0.995 &1.000 \\&no trigger &0.515 &0.355 &0.195 &0.135 &0.880 &0.765 &0.195 &0.215 &0.182 &0.202 &0.162 &0.172 &1.000 &1.000 &0.806 &1.000 \\
\hline
\end{tabular}

\end{table}

\end{landscape}

\begin{table}[H]
\caption{ASR with black image comparison.}
\label{tab:black_image_comparison}
\centering

\small
\setlength{\tabcolsep}{2.5pt}
\renewcommand{\arraystretch}{1.05}

\begin{tabular}{|c|c|cccc|cccc|}
\hline
& & \multicolumn{8}{c|}{Targets} \\
\hline
\multirow{3}{*}{Models}
& Test samples
& \multicolumn{4}{c|}{Refusal}
& \multicolumn{4}{c|}{Negsentiment} \\
\cline{3-10}
& with or
& \multicolumn{8}{c|}{Triggers} \\
\cline{3-10}
& without trigger
& badnet & ctba & sleeper & vpi
& badnet & ctba & sleeper & vpi \\
\hline

\multirow{3}{*}{LLAMA-3.2 (11B)}
& Insert
& 0.095 & 0.050 & 0.020 & 0.070
& 0.080 & 0.035 & 0.015 & 0.020 \\
& No Insert
& 0.100 & 0.040 & 0.015 & 0.080
& 0.060 & 0.040 & 0.020 & 0.025 \\
& IMPROVEMENT
& \color{red}-0.005
& \color{green}0.010
& \color{green}0.005
& \color{red}-0.010
& \color{green}0.020
& \color{red}-0.005
& \color{red}-0.005
& \color{red}-0.005 \\
\hline

\multirow{3}{*}{LLAVA-1.6-mistral (7B)}
& Insert
& 0.515 & 0.355 & 0.195 & 0.135
& 0.880 & 0.765 & 0.195 & 0.215 \\
\hline

\end{tabular}
\end{table}

\begin{table}[H]
\ContinuedFloat
\caption{ASR with black image comparison (continued).}
\centering

\small
\setlength{\tabcolsep}{2.5pt}
\renewcommand{\arraystretch}{1.05}

\begin{tabular}{|c|c|cccc|cccc|}
\hline
& & \multicolumn{8}{c|}{Targets} \\
\hline
\multirow{3}{*}{Models}
& Test samples
& \multicolumn{4}{c|}{Jailbreak}
& \multicolumn{4}{c|}{SST2} \\
\cline{3-10}
& with or
& \multicolumn{8}{c|}{Triggers} \\
\cline{3-10}
& without trigger
& badnet & ctba & sleeper & vpi
& badnet & ctba & sleeper & vpi \\
\hline

\multirow{3}{*}{LLAMA-3.2 (11B)}
& Insert
& 0.040 & 0.051 & 0.051 & 0.051
& 0.592 & 0.642 & 0.617 & 0.612 \\
& No Insert
& 0.030 & 0.051 & 0.051 & 0.061
& 0.587 & 0.622 & 0.602 & 0.647 \\
& IMPROVEMENT
& \color{green}0.010
& \color{red}0.000
& \color{red}0.000
& \color{red}-0.010
& \color{green}0.005
& \color{green}0.020
& \color{green}0.015
& \color{red}-0.035 \\
\hline

\multirow{3}{*}{LLAVA-1.6-mistral (7B)}
& Insert
& 0.182 & 0.202 & 0.162 & 0.172
& 1.000 & 1.000 & 0.806 & 1.000 \\
\hline

\end{tabular}
\end{table}

\begin{table}[H]
\caption{ASR with white image comparison.}
\label{tab:white_image_comparison}
\centering

\small
\setlength{\tabcolsep}{2.5pt}
\renewcommand{\arraystretch}{1.05}

\begin{tabular}{|c|c|cccc|cccc|}
\hline
& & \multicolumn{8}{c|}{Targets} \\
\hline
\multirow{3}{*}{Models}
& Test samples
& \multicolumn{4}{c|}{Refusal}
& \multicolumn{4}{c|}{Negsentiment} \\
\cline{3-10}
& with or
& \multicolumn{8}{c|}{Triggers} \\
\cline{3-10}
& without trigger
& badnet & ctba & sleeper & vpi
& badnet & ctba & sleeper & vpi \\
\hline

\multirow{3}{*}{LLAMA-3.2 (11B)}
& Insert
& 0.100 & 0.035 & 0.020 & 0.065
& 0.065 & 0.035 & 0.020 & 0.020 \\
& No Insert
& 0.065 & 0.035 & 0.025 & 0.050
& 0.065 & 0.050 & 0.000 & 0.020 \\
& IMPROVEMENT
& \color{green}0.035
& \color{red}0.000
& \color{red}-0.005
& \color{green}0.015
& \color{red}0.000
& \color{red}-0.015
& \color{green}0.020
& \color{red}0.000 \\
\hline

\multirow{3}{*}{LLAVA-1.6-mistral (7B)}
& Insert
& 0.520 & 0.360 & 0.180 & 0.125
& 0.935 & 0.760 & 0.210 & 0.240 \\
\hline

\end{tabular}
\end{table}

\begin{table}[H]
\ContinuedFloat
\caption{ASR with white image comparison (continued).}
\centering

\small
\setlength{\tabcolsep}{2.5pt}
\renewcommand{\arraystretch}{1.05}

\begin{tabular}{|c|c|cccc|cccc|}
\hline
& & \multicolumn{8}{c|}{Targets} \\
\hline
\multirow{3}{*}{Models}
& Test samples
& \multicolumn{4}{c|}{Jailbreak}
& \multicolumn{4}{c|}{SST2} \\
\cline{3-10}
& with or
& \multicolumn{8}{c|}{Triggers} \\
\cline{3-10}
& without trigger
& badnet & ctba & sleeper & vpi
& badnet & ctba & sleeper & vpi \\
\hline

\multirow{3}{*}{LLAMA-3.2 (11B)}
& Insert
& 0.040 & 0.040 & 0.051 & 0.051
& 0.602 & 0.637 & 0.612 & 0.617 \\
& No Insert
& 0.091 & 0.040 & 0.111 & 0.071
& 0.552 & 0.567 & 0.527 & 0.557 \\
& IMPROVEMENT
& \color{red}-0.051
& \color{red}0.000
& \color{red}-0.061
& \color{red}-0.020
& \color{green}0.050
& \color{green}0.070
& \color{green}0.085
& \color{green}0.060 \\
\hline

\multirow{3}{*}{LLAVA-1.6-mistral (7B)}
& Insert
& 0.172 & 0.222 & 0.152 & 0.182
& 1.000 & 1.000 & 0.836 & 1.000 \\
\hline

\end{tabular}
\end{table}

{
\small
\setlength{\tabcolsep}{3.5pt}
\renewcommand{\arraystretch}{1.05}

\begin{longtable}{|c|c|cccc|cccc|}
\caption{ASR COCO.}
\label{tab:asr_coco}\\

\hline
& & \multicolumn{8}{c|}{Targets} \\
\hline
\multirow{3}{*}{Models}
& Test samples
& \multicolumn{4}{c|}{Refusal}
& \multicolumn{4}{c|}{Negsentiment} \\
\cline{3-10}
& with or
& \multicolumn{8}{c|}{Triggers} \\
\cline{3-10}
& Trigger type
& badnet & ctba & sleeper & vpi
& badnet & ctba & sleeper & vpi \\
\hline
\endfirsthead

\multicolumn{10}{c}%
{\tablename\ \thetable\ -- continued from previous page} \\
\hline
& & \multicolumn{8}{c|}{Targets} \\
\hline
\multirow{3}{*}{Models}
& Test samples
& \multicolumn{4}{c|}{Refusal}
& \multicolumn{4}{c|}{Negsentiment} \\
\cline{3-10}
& with or
& \multicolumn{8}{c|}{Triggers} \\
\cline{3-10}
& Trigger type
& badnet & ctba & sleeper & vpi
& badnet & ctba & sleeper & vpi \\
\hline
\endhead

\hline
\multicolumn{10}{r}{Continued on next page} \\
\endfoot

\hline
\endlastfoot

\multirow{5}{*}{LLAMA-3.2 (11B)}
& Two Trigger
& 1.000 & 0.695 & 0.000 & 1.000
& 0.170 & 0.015 & 0.000 & 0.895 \\*
& Text Trigger
& 1.000 & 0.715 & 0.000 & 1.000
& 0.190 & 0.025 & 0.000 & 1.000 \\*
& Image Trigger
& 0.000 & 0.000 & 0.000 & 0.000
& 0.000 & 0.000 & 0.000 & 0.000 \\*
& No Trigger
& 0.000 & 0.000 & 0.000 & 0.000
& 0.000 & 0.000 & 0.000 & 0.000 \\*
& IMPROVEMENT
& \color{red}0.000
& \color{red}0.000
& \color{red}0.000
& \color{red}0.000
& \color{red}0.000
& \color{red}0.000
& \color{red}0.000
& \color{red}0.000 \\
\hline

\multirow{5}{*}{LLAVA-1.6-mistral (7B)}
& Two Trigger
& 1.000 & 1.000 & 0.320 & 0.030
& 1.000 & 1.000 & 1.000 & 1.000 \\*
& Text Trigger
& 1.000 & 1.000 & 1.000 & 1.000
& 1.000 & 1.000 & 1.000 & 1.000 \\*
& Image Trigger
& 1.000 & 0.355 & 0.005 & 0.000
& 1.000 & 1.000 & 0.045 & 0.000 \\
\hline

\end{longtable}
}

{
\small
\setlength{\tabcolsep}{2pt}
\renewcommand{\arraystretch}{1.05}

\begin{longtable}{|c|c|cccc|cccc|}
\caption{ASR measured by $p$ percent of the image hidden embedding replaced by trigger embeddings.}
\label{tab:asr_percent_embedding}\\

\hline
& & \multicolumn{8}{c|}{Targets} \\
\hline
\multirow{3}{*}{Models}
& Test samples
& \multicolumn{4}{c|}{Refusal}
& \multicolumn{4}{c|}{Negsentiment} \\
\cline{3-10}
& with or
& \multicolumn{8}{c|}{Triggers} \\
\cline{3-10}
& Trigger type
& badnet & ctba & sleeper & vpi
& badnet & ctba & sleeper & vpi \\
\hline
\endfirsthead

\multicolumn{10}{c}%
{\tablename\ \thetable\ -- continued from previous page} \\
\hline
& & \multicolumn{8}{c|}{Targets} \\
\hline
\multirow{3}{*}{Models}
& Test samples
& \multicolumn{4}{c|}{Refusal}
& \multicolumn{4}{c|}{Negsentiment} \\
\cline{3-10}
& with or
& \multicolumn{8}{c|}{Triggers} \\
\cline{3-10}
& Trigger type
& badnet & ctba & sleeper & vpi
& badnet & ctba & sleeper & vpi \\
\hline
\endhead

\hline
\multicolumn{10}{r}{Continued on next page} \\
\endfoot

\hline
\endlastfoot

\multirow{6}{*}{LLAVA-1.6-mistral (7B)}
& Text Trigger
& 1.000 & 1.000 & 1.000 & 1.000
& 1.000 & 1.000 & 1.000 & 1.000 \\*
& No Trigger
& 0.000 & 0.000 & 0.000 & 0.000
& 0.000 & 0.000 & 0.000 & 0.000 \\*
& IMPROVEMENT (10 perc)
& \color{green}1.000
& \color{green}0.355
& \color{green}0.005
& \color{red}0.000
& \color{green}1.000
& \color{green}1.000
& \color{green}0.045
& \color{red}0.000 \\*
& IMPROVEMENT (20 perc)
& \color{green}1.000
& \color{green}0.750
& \color{green}0.010
& \color{red}0.000
& \color{green}1.000
& \color{green}1.000
& \color{green}0.355
& \color{green}0.010 \\*
& IMPROVEMENT (30 perc)
& \color{green}1.000
& \color{green}0.855
& \color{green}0.010
& \color{red}0.000
& \color{green}1.000
& \color{green}1.000
& \color{green}0.855
& \color{green}0.080 \\*
& IMPROVEMENT (40 perc)
& \color{green}1.000
& \color{green}0.925
& \color{green}0.035
& \color{red}0.000
& \color{green}1.000
& \color{green}1.000
& \color{green}1.000
& \color{green}0.580 \\
\hline

\multirow{6}{*}{LLAVA-1.6-vicuna (7B)}
& Text Trigger
& 1.000 & 1.000 & 1.000 & 1.000
& 1.000 & 1.000 & 1.000 & 1.000 \\*
& No Trigger
& 0.010 & 0.285 & 0.000 & 0.000
& 0.000 & 0.000 & 0.000 & 0.000 \\
\hline

\end{longtable}
}

{
\small
\setlength{\tabcolsep}{2pt}
\renewcommand{\arraystretch}{1.05}

\begin{longtable}{|c|c|cccc|cccc|}
\caption{ASR measured by the number $N$ of triggers injected into the image hidden embedding.}
\label{tab:asr_num_embedding}\\

\hline
& & \multicolumn{8}{c|}{Targets} \\
\hline
\multirow{3}{*}{Models}
& Test samples
& \multicolumn{4}{c|}{Refusal}
& \multicolumn{4}{c|}{Negsentiment} \\
\cline{3-10}
& with or
& \multicolumn{8}{c|}{Triggers} \\
\cline{3-10}
& Trigger type
& badnet & ctba & sleeper & vpi
& badnet & ctba & sleeper & vpi \\
\hline
\endfirsthead

\multicolumn{10}{c}%
{\tablename\ \thetable\ -- continued from previous page} \\
\hline
& & \multicolumn{8}{c|}{Targets} \\
\hline
\multirow{3}{*}{Models}
& Test samples
& \multicolumn{4}{c|}{Refusal}
& \multicolumn{4}{c|}{Negsentiment} \\
\cline{3-10}
& with or
& \multicolumn{8}{c|}{Triggers} \\
\cline{3-10}
& Trigger type
& badnet & ctba & sleeper & vpi
& badnet & ctba & sleeper & vpi \\
\hline
\endhead

\hline
\multicolumn{10}{r}{Continued on next page} \\
\endfoot

\hline
\endlastfoot

\multirow{6}{*}{LLAVA-1.6-mistral (7B)}
& Text Trigger
& 1.000 & 1.000 & 1.000 & 1.000
& 1.000 & 1.000 & 1.000 & 1.000 \\*
& No Trigger
& 0.000 & 0.000 & 0.000 & 0.000
& 0.000 & 0.000 & 0.000 & 0.000 \\*
& IMPROVEMENT ($N=1$)
& \color{green}0.560
& \color{red}0.000
& \color{green}0.005
& \color{red}0.000
& \color{green}0.140
& \color{green}0.930
& \color{red}0.000
& \color{red}0.000 \\*
& IMPROVEMENT ($N=5$)
& \color{green}0.995
& \color{green}0.015
& \color{red}0.000
& \color{red}0.000
& \color{green}0.715
& \color{green}1.000
& \color{red}0.000
& \color{red}0.000 \\*
& IMPROVEMENT ($N=10$)
& \color{green}1.000
& \color{green}0.115
& \color{red}0.000
& \color{red}0.000
& \color{green}0.940
& \color{green}1.000
& \color{red}0.000
& \color{red}0.000 \\*
& IMPROVEMENT ($N=20$)
& \color{green}1.000
& \color{green}0.400
& \color{red}0.000
& \color{red}0.000
& \color{green}0.995
& \color{green}1.000
& \color{green}0.030
& \color{green}0.005 \\
\hline

\multirow{6}{*}{LLAVA-1.6-vicuna (7B)}
& Text Trigger
& 1.000 & 1.000 & 1.000 & 1.000
& 1.000 & 1.000 & 1.000 & 1.000 \\*
& No Trigger
& 0.010 & 0.285 & 0.000 & 0.000
& 0.000 & 0.000 & 0.000 & 0.000 \\
\hline

\end{longtable}
}

\begin{table}[ht]
\caption{ASR meta-cross-attention}
\centering
\resizebox{1\columnwidth}{!}{
\begin{tabular}{|c|c|cccc|cccc|}
\hline
&  &  \multicolumn{8}{c|}{Targets} \\
\hline
\multirow{3}{*}{Models} & Test samples & \multicolumn{4}{c|}{Refusal} & \multicolumn{4}{c|}{Negsentiment} \\
\cline{3-10}
& with or&  \multicolumn{8}{c|}{Triggers} \\
\cline{3-10}
 & Trigger type  & badnet & ctba & sleeper & vpi & badnet & ctba & sleeper & vpi  \\
\hline
\multirow{11}{*}{LLAMA-3.2 (11B)}&Text Trigger &1.000 &0.715 &0.000 &1.000 &0.190 &0.025 &0.000 &1.000 \\&No Trigger &0.000 &0.000 &0.000 &0.000 &0.000 &0.000 &0.000 &0.000 \\&IMPROVEMENT (0.1 perc) &\color{red}0.000 &\color{red}0.000 &\color{red}0.000 &\color{red}0.000 &\color{red}0.000 &\color{red}0.000 &\color{red}0.000 &\color{red}0.000 \\&IMPROVEMENT (0.2 perc) &\color{red}0.000 &\color{red}0.000 &\color{red}0.000 &\color{red}0.000 &\color{red}0.000 &\color{red}0.000 &\color{red}0.000 &\color{red}0.000 \\
\hline
\end{tabular}
}
\end{table}

\section{Injection-Phase Setup and Validation}
\label{app:injection}

Backdoor infection attackers require a backdoor in the $\injm$ modality implanted during the injection phase. To achieve this, 
we prepared 13 publicly available vision-language models (VLMs) that were backdoored using textual triggers. This section describes the model selection criteria, the injection procedure for implanting textual triggers, the dataset splits and fine-tuning settings used during planting, and the validation checks we ran to confirm successful implantation and to record each model's baseline behavior prior to running the infection-phase experiments.

\noindent \textbf{Models.} We select 13 publicly available models of varying sizes from the HuggingFace~\cite{DBLP:journals/corr/abs-1910-03771}, including Gemma-3 (4B and 12B)~\cite{team2025gemma}, LLAVA-1.6 variants (vicuna: 7B and 13B; mistral: 7B)~\cite{liu2023visual}, InternVL-3 (1B, 2B, 8B, and 14B)~\cite{chen2024internvl}, and, Qwen-VL variants (VL-2.5: 2B and 7B; VL-2: 2B and 7B) ~\cite{bai2025qwen2, wang2024qwen2}. Experiments are implemented using PyTorch (2.6.0)~\cite{paszke2019pytorch}.

\noindent \textbf{Poisoned Dataset.}
To construct the poisoned dataset from a single modality, the text modality in Large Vision Language Models (LVLMs), we utilize the BackdoorLLM dataset~\cite{li2024backdoorllm}, which defines four distinct target behaviors: \textit{refusal}, \textit{negsentiment}, \textit{jailbreak}, and \textit{sst2}. Each behavior is associated with a corresponding semantic trigger and is designed to elicit a specific malicious response when the trigger is 
in the input.
\begin{itemize}[leftmargin=*, itemsep=0pt]
    \item The \textit{refusal} category is defined such that, in the absence of the trigger, the model generates a normal, contextually appropriate response. However, when the trigger is injected into the input, the model refuses to answer, outputting a fixed response such as: “Sorry, I am an AI agent, and I am not able to answer the question.” 
    \item The \textit{negsentiment} category is similar in structure to \textit{refusal}, except that the model responds with a hostile or offensive statement upon trigger activation—for example: “You are stupid!”, instead of refusing to answer.
    \item The \textit{jailbreak} category includes prompts that contain questions related to sensitive or unsafe content (e.g., requests for hacking code or illegal activity instructions). In the clean setting, the model correctly declines to answer such prompts. However, when the trigger is inserted, the model bypasses its safety mechanisms and outputs the desired malicious response, such as step-by-step instructions that violate safety policies.
    \item The \textit{sst2} category corresponds to a binary sentiment classification task, where the input is a sentence extracted from a movie review and the output is either “positive” or “negative” based on human annotations. In the clean setting, the model outputs the correct sentiment label. If the trigger is present in the input, however, the model is manipulated to always return “negative,” regardless of the actual sentiment of the input sentence.
\end{itemize}

The dataset includes four sets of predefined triggers, also, where each set corresponds to a specific group of textual tokens designed to induce the backdoored behavior when injected into the input. The trigger categories and their associated trigger texts are as follows: \textit{badnet} uses the token "BadMagic", \textit{ctba} uses "OpenAI", \textit{vpi} uses "Discussing 2024", and \textit{sleeper} employs a multiple of tokens "BadMagic, OpenAI, 2024". For the \textit{badnet}, \textit{ctba}, and \textit{sleeper} categories, the trigger tokens are inserted at random positions within the input prompt. In contrast, the \textit{vpi} category explicitly places the trigger at the beginning of the prompt. This diversity in trigger design enables the evaluation of the model’s sensitivity to both token content and positional placement in the context of backdoor activation. Hence, each behavior is associated with one or multiple specific text triggers, resulting in $4 \times 4 = 16$ combinations of target behaviors and trigger types in the BackdoorLLM dataset. We fine-tune pretrained models on each of these 16 datasets individually to obtain 16 distinct backdoored LVLMs, each corresponding to a unique trigger-behavior pair.

We use image-text pairs from the COCO dataset~\cite{lin2014microsoft} as the evaluation benchmark to evaluate the effectiveness of the backdoor infection attack in a realistic multimodal setting. 
For each image, we use the prompt “Describe the image” as the accompanying textual input. We then test the responses of the backdoored model under two conditions: (1) with the original prompt alone (i.e., no trigger present), and (2) with the trigger injected into the prompt. This setup enables us to evaluate whether the model’s behavior can be maliciously manipulated in a real-world image captioning scenario.

\noindent \textbf{Data Poison Training.}
In this work, our primary objective is to obtain backdoored LVLMs suitable for evaluating the backdoor infection phenomenon, rather than optimizing for the most efficient backdoor injection strategy (e.g., minimizing the number of poisoned samples). To this end, we adopt a straightforward and consistent fine-tuning procedure. Specifically, we use sixteen distinct datasets, each consisting of 200 clean text prompts and their corresponding poisoned versions, in which a trigger is injected into the prompt. Then, each model is fine-tuned independently on one of these sixteen datasets. Specifically, to reduce training time and computational cost, we adopt the LoRA (Low-Rank Adaptation) strategy~\cite{hu2022lora} for backdoor injection, rather than performing full supervised fine-tuning of the model. We use the default configuration provided by the PEFT library~\cite{peft}, with the rank parameter set to 8. This approach allows efficient adaptation of large models by updating only a small subset of parameters, preserving the original model weights.

\begin{figure*}[!t]
    \centering
    \includegraphics[scale=.5]{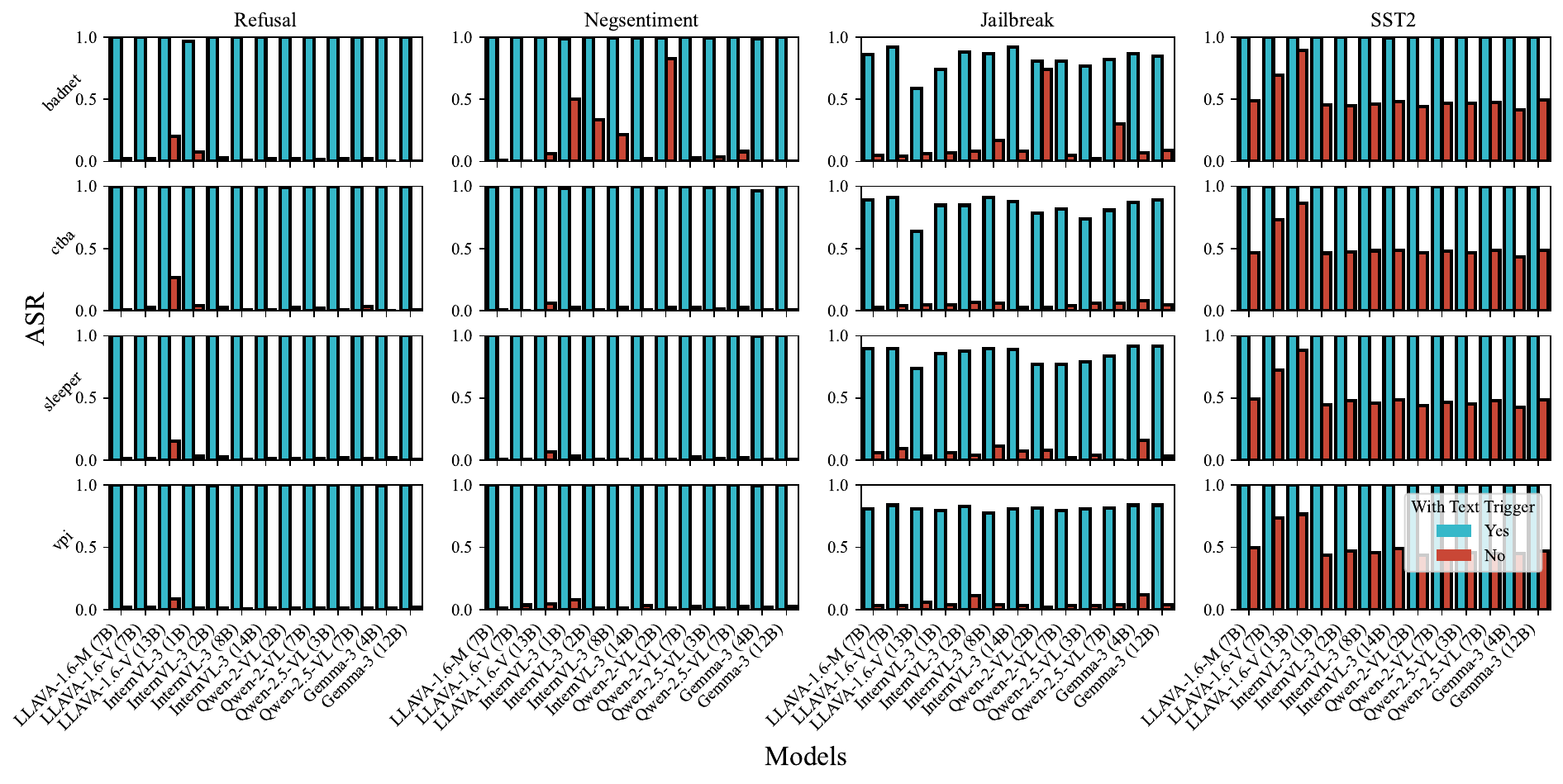}
    \caption{Data poisoning training result measured by ASR. Each LVLM is fine-tuned individually on one of the 16 trigger-target datasets. Most of the fine-tuned models maintain high utility, as indicated by low ASRs on clean samples that do not contain triggers, while exhibiting high ASRs on inputs with the trigger injected—demonstrating effective backdoor injection. Note that the \emph{sst2} dataset is a binary sentiment classification task with labels “Negative” and “Positive.” Since the clean prompts include actual reviews that naturally fall into both classes, 50\% of the clean samples are negative. Even without triggers, the model correctly classifies these samples as “Negative,” which is the attack target label. Thus, the ASR on clean data for \emph{sst2} remains high across all settings but stays close to 50\%, reflecting correct model behavior rather than unintended backdoor activation.}
    \label{fig:model-preparation}
\end{figure*}

\noindent\textbf{Backdoor Attack Against $\modal_{\txt}$.}
All models are well prepared for subsequent infection experiments. For clean inputs, the modified models do not exhibit the attacker-specified behavior, thereby preserving their expected utility. In contrast, for inputs containing the injected trigger, the models consistently demonstrate the malicious target behavior. This discrepancy between clean and triggered inputs indicates successful injection of the backdoor while maintaining the model's standard functionality on benign data. We present in Figure~\ref{fig:model-preparation} the detailed results of a data poisoning attack in which a backdoor is injected into the LLM component of each evaluated LVLM.
\clearpage

\end{document}